\theoremstyle{plain}
\newtheorem{theorem}{Theorem}[section]
\newtheorem{claim}[theorem]{Claim}
\theoremstyle{definition}
\newtheorem{remark}[theorem]{Remark}
\newtheorem{notation}[theorem]{Notation}
\newtheorem{definition}[theorem]{Definition}
\newtheorem{convention}[theorem]{Convention}
\numberwithin{equation}{section}
\newcommand{\cF}{\mathcal F}
\newcommand{\cI}{\mathcal I}
\newcommand{\cL}{\mathcal L}
\newcommand{\ccP}{\mathscr{P}}
\newcommand{\al}{\alpha}
\newcommand{\be}{\beta}
\newcommand{\ga}{\gamma}
\newcommand{\de}{\delta}
\newcommand{\ep}{\epsilon}
\newcommand{\la}{\lambda}
\newcommand{\La}{\Lambda}
\newcommand{\si}{\sigma}
\newcommand{\Si}{\Sigma}
\newcommand{\Om}{\Omega}
\newcommand{\RR}{\mathbb R}
\newcommand{\rar}{\rightarrow}
\newcommand{\p}{\parallel}
\newcommand{\K}{\mathscr{K}}
\newcommand{\mss}{\hspace{0.2cm}}
\newcommand{\msb}{\hspace{0.35cm}}
\newcommand{\bv}{\kappa}
\newcommand{\sv}{\vartheta}
\def\beq{\begin{equation}}
\def\ee{\end{equation}}
\title[Relativistic viscous fluids]{On the well-posedness of relativistic viscous fluids with
non-zero  vorticity}
\date{\today}
\author[Czubak]{Magdalena Czubak}
\address{Department of Mathematical Sciences \\ Binghamton University (SUNY) \\
Binghamton, NY, USA}
\email{czubak@math.binghamton.edu}
\thanks{Magdalena Czubak is partially supported by a grant from the Simons Foundation 246255.}
\author[Disconzi]{Marcelo M. Disconzi}
\address{Department of Mathematics\\
Vanderbilt University\\ Nashville, TN, USA}
\email{marcelo.disconzi@vanderbilt.edu}
\thanks{Marcelo M. Disconzi is partially supported by NSF grant 1305705.}
\begin{document}

\maketitle

\begin{abstract}
We study the problem of coupling Einstein's equations to 
a relativistic and physically well-motivated modification of the
Navier-Stokes equations. Under a technical condition 
for the vorticity, we prove existence and uniqueness in a 
suitable Gevrey class 
if the fluid's dynamic velocity has vanishing divergence,
and show that the solutions enjoy the finite propagation
speed property.
\end{abstract}

\tableofcontents

\section{Introduction and statement of the results. \label{intro}}

This paper improves the results of \cite{DisconziViscousFluidsNonlinearity}, where the formulation
of relativistic viscous fluids has been investigated by the 
second author. There, following Lichnerowicz 
\cite{LichnerowiczBookGR}, 
a physically well-motivated 
relativistic version of the Navier-Stokes equations has been proposed, and
well-posedness, in Gevrey spaces, 
of the corresponding Einstein-Navier-Stokes system
established under the assumption that the fluid is incompressible
(in a relativistic sense, see below) and irrotational. The goal 
of the present work is to remove the latter hypothesis, 
replacing it by a restriction on the initial data that allows
the vorticity to be non-zero. 

Finding the correct way of incorporating viscosity into 
General Relativity
is a longstanding problem\footnote{It is interesting to notice that
even the correct formulation of the non-relativistic Navier-Stokes
equations on a general Riemannian manifold does not seem to present
itself in a natural and obvious way, see \cite{ChanCzubakNavierStokes,EbinMarsden,TaylorPDE3}.} 
\cite{LichnerowiczBookGR, MisnerThorneWheeler, WeinbergCosmology},
one that has recently attracted attention due to its importance
in the study of heavily dense objects (as neutron stars),
 and models of the early
universe. See, for instance, 
\cite{Cooketal2003,
Disconzi_Kephart_Scherrer_2015,
Dosetal,
DosTsa,
Duez_review,
DuezetallEinsteinNavierStokes,
Herr_axially,
Herr_axially_shear,
Kreiss_et_al,
Lovelace_Duez_et_al,
MaartensDissipative,
Pahwa_FLRW,
Pal_et_al,
Pal_Reula_Rezzolla,
Piattella_et_al,
RezzollaZanottiBookRelHydro,
Saijo,
SorBran,
Weinberg_GR_book} 
 and references therein.
A thorough and more up-to-date discussion, including details
on the First and Second Order Theories mentioned below,
 can be found in \cite{RezzollaZanottiBookRelHydro}.

The main difficulty in formulating a theory of relativistic viscous fluids
 seems to stem from 
 the absence of 
a variational formulation for the classical, non-relativistic, Navier-Stokes
equations (see, however, \cite{EyinkActionPrinciple, YasueVariationalNavierStokes} for formalisms that allow
for a variational principle in some more general sense). 
Lacking such a formulation, one does not have a canonical
way of determining what the stress-energy tensor $T_{\al\be}$ ought to be in the 
context of General Relativity. Different proposals have been made in this
regard, but they have all led to either ill-posed equations, or to 
equations that imply the existence of
superluminal signals. This approach, where one couples Einstein's
equations to the Navier-Stokes equations via the introduction of a 
suitable stress-energy tensor is known as (relativistic) Standard Irreversible
Thermodynamics or First Order Theories. We remark that it 
consists in the traditional approach of coupling 
gravity to matter, one that has been successful for almost all other
matter fields \cite{ChoquetBruhatGRBook, HawkingEllisBook}.

The failure of First Order Theories in producing a consistent theory
of relativistic fluids led researches to devise a different approach, known 
as (relativistic) Rational 
Extended Irreversible Thermodynamics, or Second Order Theories, or yet 
Divergence-type formulation of Extended Irreversible 
Thermodynamics 
\cite{JouetallBook, MuellerRuggeriBook, RezzollaZanottiBookRelHydro}.
 
In such theories, one extends the space of variables of the model, and
the resulting equations are of hyperbolic character in several important
situations of physical interest, leading to equations that are well-posed,
with disturbances propagating at finite speed.
It is not at all clear, however, that the equations remain hyperbolic under all
physically realistic scenarios. In fact,
Rezzolla and Zanotti conclude their detailed discussion of relativistic viscous
fluids pointing out that ``the construction of a formulation that is cast in a 
divergence-type is not, \emph{per se}, sufficient to guarantee hyperbolicity"
\cite{RezzollaZanottiBookRelHydro}.
Furthermore, the plethora of models that comes out
of the extended thermodynamic approach suggests that it entails many
\emph{ad-hoc} features, in sharp contrast to the usually unique way of coupling
gravity to matter via the introduction of the stress-energy tensor of matter 
fields (when the latter
 is uniquely determined by a variational characterization).

These considerations 
suggest that it is  worthwhile to take a fresh look at the question
of whether there is a correct stress-energy tensor $T_{\al\be}$ that describes relativistic viscous
fluids, and that can be coupled to gravity in the traditional way, i.e.,
as in the Standard Irreversible Thermodynamics approach (see also the discussion
in section \ref{status}).
This idea is reinforced by the fact that  recent numerical advances in the 
modeling of rapidly rotating stars with shear viscosity employ 
the first order formalism \cite{DuezetallEinsteinNavierStokes}.

Consider the following stress-energy tensor for a viscous fluid:
\begin{align}
T_{\al\be} & = (p + \varrho ) u_\al u_\be - p g_{\al\be}
+ \bv \pi_{\al\be} \nabla_\mu C^\mu 
+ \sv\pi_\al^\rho \pi_\be^\mu (\nabla_\rho C_\mu + \nabla_\mu C_\rho),
\label{T_vis}
\end{align}
where
$p$ and $\varrho$ are respectively the
pressure and density of the fluid, $u$ is its four-velocity,
the bulk viscosity $\bv$ and the shear viscosity  $\sv$ are 
 non-positive constants\footnote{The coefficients of bulk and shear viscosity have a definite sign, the 
 choice of which depends on conventions. Sometimes $T_{\al \be}$
is written with a shear term $-\sv\pi_\al^\rho \pi_\be^\mu (\nabla_\rho C_\mu + \nabla_\mu C_\rho)$, $\sv >0$, which corresponds to having $\sv < 0$ in our formulation. 
While such sign differences are important when one explicitly
computes the values of physical observables, for the results here presented all that matters
is that $\sv \neq 0$. In physically realistic models, it is also the case that $\sv$ is 
not a constant, but a smooth function of the thermodynamic variables. Our result
can be extended to this case with minor changes in the proof, provided that
$\sv$ never vanishes, but we do not
include this here for brevity.}, 
$g$ is a Lorentzian metric\footnote{Our convention for the metric is $(+---)$.}
  and 
 $\pi_{\al\be} = g_{\al\be} - u_\al u_\be$.
$p$ and $\varrho$ are related by an equation known as equation of state, 
the choice of which depends on the nature of fluid. $C$ is known as
 the dynamic velocity 
  (also called the current of the fluid), and it is related to $u$ by
\begin{gather}
C_\al = F u_\al,
\label{dyn_vel}
\end{gather} 
where $F$ is the so-called index of the fluid. It is  defined as
\beq\label{indexF}
F = 1 + \epsilon + \frac{p}{r},
\ee
where $\ep\geq0$ is the specific internal energy and $r\geq 0$  is the rest mass density \cite{FriRenCauchy}. 
 The density $\varrho$ is related to the internal energy and the rest mass by
\[
\varrho=r(1+\ep),
\]
so that 
\beq\label{iF}
rF=\varrho + p.
\ee
The index of the fluid, $F$, and the dynamic velocity, $C$, have been introduced by
Lichnerowicz in his study of relativistic inviscid fluids
\cite{LichnerowiczBookGR, Lich_fluid_1, Lich_fluid_2, Lichnerowicz_MHD_book}.

Lichnerowicz was also the first one to write down the stress-energy tensor
 (\ref{T_vis}) \cite{LichnerowiczBookGR}, except that it  contained an extra term of 
 the form
$\sv \pi_{\al\be} u^\mu \partial_\mu F$.  This extra term was pointed out by
Lichnerowicz himself and later by Pichon \cite{PichonViscous}, to lead to an 
 indetermination in the computation of the pressure. Pichon proposed
 subtracting this term, which leads to (\ref{T_vis}).  
 See \cite{LichnerowiczBookGR,PichonViscous} for more background on \eqref{T_vis}. 
 The reader should notice that 
 (\ref{T_vis}) reduces to the stress-energy tensor of an ideal, i.e., inviscid, 
 fluid when $\bv = \sv = 0$. Indeed, this is just one of several natural 
 requirements that one would impose when looking for an appropriate
definition of a stress-energy tensor for a relativistic fluid with viscosity, 
see \cite{DisconziViscousFluidsNonlinearity}. We point out that Choquet-Bruhat has also proposed 
a stress-energy tensor similar to (\ref{T_vis}) \cite{ChoquetBruhatGRBook}. Her proposal
does not include the projection terms $\pi_{\al\be}$, and the viscous
terms are, therefore, linear in the velocity. We remark that yet another proposal
for a viscous relativistic stress-energy tensor appears in 
\cite{TempleViscous}.

Next, recall the first law of thermodynamics\footnote{See, for instance, \cite{DisconziRemarksEinsteinEuler,FriRenCauchy} for a review of the thermodynamic properties of 
relativistic fluids.} 
\beq\label{flt}
\theta ds=d\ep+pdv
\ee
where  $\theta$ is the absolute temperature, $s$ the specific entropy, and $v$ the specific volume.  We have $v=\frac 1r$ \cite{FriRenCauchy}, 
so by \eqref{indexF}, \eqref{flt} can be written as
\beq\label{flt2}
\theta ds=dF-\frac 1r dp.
\ee

A fluid with stress-energy tensor (\ref{T_vis}) is said to be incompressible if 
\beq\label{incompressible_def}
\nabla_\mu C^\mu = 0.
\ee  
\begin{remark}
Here we follow the literature (e.g., \cite{ChoquetBruhatGRBook, Lichnerowicz_MHD_book})
and call incompressible a fluid satisfying (\ref{incompressible_def}). We stress, however, that this terminology
is based more on an analogy with Newtonian physics (where incompressible fluids are characterized 
by vanishing divergence) than on actual physical properties of fluids, in that (\ref{incompressible_def})
does not imply incompressibility in the exact sense of the word.
Pseudo-incompressible would probably be a better terminology, but it is not clear if adopting a different terminology
than what is used in the literature would not cause more confusion. 
\label{remark_incompressibility}
\end{remark}

Then,  by \eqref{iF},  $T_{\al\be}$ 
becomes
\begin{align}
T_{\al\be}  = rF u_\al u_\be - p g_{\al\be}
+ \sv\pi_\al^\rho \pi_\be^\mu (\nabla_\rho C_\mu + \nabla_\mu C_\rho).
\label{T_vis_i}
\end{align}
Moreover, because $\pi^\mu_\beta u_\mu=0$, we can rewrite \eqref{T_vis_i} as
\beq\label{Tab}
T_{\al\be}  = rF u_\al u_\be - p g_{\al\be}
+ F\sv\pi_\al^\rho \pi_\be^\mu (\nabla_\rho u_\mu + \nabla_\mu u_\rho).
\ee

Finally, we define the vorticity tensor by
\begin{gather}
\Om_{\al\be } = \nabla_\al C_\be - \nabla_\be C_\al 
\equiv \partial_\al C_\be - \partial_\be C_\al.
\label{vorticity}
\end{gather}
A fluid is called irrotational if $\Om = 0$. Notice that $\Om$ is anti-symmetric,
so it has only six independent components.

We follow the standard approach of assuming that only two
of the thermodynamic quantities are independent with the question of which ones left as a matter of choice.  The other quantities are then
determined by the first law of thermodynamics and an equation of state.  The equation of state 
depends on the nature of the fluid, and physically, the relations between the thermodynamic quantities 
should be invertible\footnote{Upon making such assumptions, we are restricting ourselves to fluids in a single phase and
ruling out the possibility of phase transitions.}.  Here we shall assume that 
$r$ and $s$ are independent and postulate an equation of state 
of the form
\begin{gather}
\varrho = \ccP(r,s).
\label{eq_state_r_s}
\end{gather}
It follows that $p = p(r,s)$, 
$\theta=\theta(r,s)$, $\epsilon=\epsilon(r,s)$, 
and $F=F(r,s)$ are known if $r$ and $s$ are. We note that later on it will be more convenient to treat $s$ and $F$
as independent variables. Then the equation
of state will be given by $r = r(F,s)$.

On physical grounds, one has that $F > 0$.
This allows to restrict to positive values when treating $F$ as
an independent variable.
In this situation, the following
condition will be assumed to hold:
\begin{gather}
\frac{\partial r}{\partial F} \geq  \frac{r}{F},
\label{sound_speed_condition}
\end{gather} 
in particular $\frac{\partial r}{\partial F} > 0$ if
$r > 0$. 
Condition (\ref{sound_speed_condition}) expresses the statement that sound waves in an ideal fluid
travel at most at the speed of light.  This condition has to be satisfied if we want to recover the stress-energy
tensor of an ideal fluid when $\bv = \sv = 0$ \cite{Lichnerowicz_MHD_book}
We suppose that the equation of state is such that the temperature
satisfies
\begin{align}
\begin{split}
\theta(r,s) > 0 \text{ if } r > 0, s \geq 0, \\
\theta(F,s) > 0 \text{ if } s \geq 0, F > 0,
\end{split}
\label{temperature}
\end{align}
expressing the positivity of the temperature regardless of  the choice
of independent variables.

The full system of equations derived from coupling Einstein's equations
to (\ref{T_vis_i}) is rather complicated, see 
\cite{DisconziViscousFluidsNonlinearity}. Thus we consider, besides incompressibility, one further
simplifying assumption, namely, we investigate the sub-class of solutions 
for which the vorticity evolves according to 
\begin{gather}
\cL_C \Om_{\al \be}
+ q u^\mu \nabla_\mu \nabla_\al C_\be - q
u^\mu \nabla_\mu \nabla_\be C_\al +
\partial_\al(\theta F) \partial_\be s - \partial_\be(\theta F) \partial_\al s
= \cF_\sv.
\label{evolution_vorticity}
\end{gather}
Here, $\cL_C$ denotes the Lie derivative in the direction of $C$, 
$\cF_\sv$ is a smooth function
 of $\Om$, $g$ and its derivatives up to second
order, $C$ and $u$ and their derivatives up to first order.
$q$ is a constant, and $\cF_\sv$ and $q$ 
may also depend on the parameter $\sv$. $\cF_\sv$ 
and $q$ dictate, to a certain extent, which quantities are considered
relevant in some particular model, and, therefore, are chosen according
to the problem one wishes to study (see section \ref{vorticity_in_rel}). 
We discuss the restrictions imposed by 
(\ref{evolution_vorticity}) in section \ref{restriction_vorticity}.
It should be noticed that one must have $q=0$ and that $\cF_\sv$
cannot be chosen freely if $\sv = 0$ (although $\sv = 0$ will not be treated here).

The starting point is the following system of equations: Einstein's equations coupled to 
(\ref{T_vis}) and supplemented by (\ref{incompressible_def}) and 
(\ref{evolution_vorticity}), reading
\begin{subnumcases}{\label{original}}
R_{\al\be} - \frac{1}{2}R g_{\al\be} + \La g_{\al\be}= \K T_{\al\be},
\label{original_Einstein} \\
\nabla^\al T_{\al\be} = 0,
\label{original_conservation} \\
\nabla_\al( r u^\al )  = 0, 
\label{original_mass_conservation} \\
\nabla_\mu C^\mu = 0,
\label{original_incompressible} \\
\cL_C \Om_{\al\be} +q u^\mu \nabla_\mu \nabla_\al C_\be - q
u^\mu \nabla_\mu \nabla_\be C_\al = \widetilde{\cF}_\sv, 
\label{original_vorticity}
\\
u^\al u_\al = 1.
\label{original_normalization}
\end{subnumcases}
where $R_{\al\be}$ and $R$ are the Ricci and scalar curvature
of the metric $g$,  $\K$ is a coupling constant, and $\La$ is 
the cosmological constant\footnote{Our results do hold irrespective
of the value of $\La$.}.  We recall that (\ref{original_conservation}) is in fact a consequence 
of (\ref{original_Einstein}) in view of the Bianchi identities, but it is customary to list it along with the other equations.
In the sequel we set $\K$ to $1$. The equation 
(\ref{original_mass_conservation}) says
the mass is locally conserved along the flow lines, and (\ref{original_normalization}) is the standard
normalization condition on the velocity of a relativistic fluid.  
In general, without 
(\ref{original_mass_conservation}), the motion of the fluid is underdetermined.
Equation (\ref{original_vorticity}) is simply (\ref{evolution_vorticity}), 
with $\widetilde{\cF}_\sv$ defined in the obvious fashion.
  A useful consequence of \eqref{original_normalization} often used in computations is
\beq\label{useful}
u^\al \nabla_\be u_\al=0.
 \ee
The unknowns are the metric $g$, the fluid velocity $u$,  the specific entropy $s$, and the rest mass density $r$, where  $s$ and $r$ are non-negative
real valued functions. We suppose that we are also given a
smooth function $\ccP: \RR_+ \times \RR_+ \rar \RR$ that gives the 
 equation of state (\ref{eq_state_r_s}), with the other thermodynamic quantities  then given as functions of $s$ and $r$ as discussed above.

\begin{definition}
System (\ref{original}) with $T_{\al\be}$ 
given by (\ref{T_vis}) will be called the
incompressible Einstein-Navier-Stokes
system.
\end{definition}

\begin{remark}
Here we recall once more that the terminology ``incompressible fluid" is a bit misleading, see remark \ref{remark_incompressibility}.
\end{remark}

\noindent \textbf{Assumption.} 
We shall assume for the rest of the text that $\sv \neq 0$. \\

An initial data set for the Einstein-Navier-Stokes system consists of the following:
\begin{itemize}
\item a three-dimensional  manifold $\Si$, 
\item a Riemannian metric $g_0$ (with our conventions this metric is negative definite)
\item   a symmetric two-tensor $\kappa$, 
\item a real valued non-negative function $s_0$, 
\item a real valued non-negative function $r_0$,
\item a vector field $v$.
\end{itemize}
The last five quantities are defined on $\Si$. As it is well-known, these data must satisfy the constraint equations.  In a coordinate system with $\partial_0$ transversal and
 $\partial_i$, $i=1,2,3$, tangent to $\Si$ the constraint equations are given by
\begin{gather}
S_{\al 0 } =  T_{\al 0 },
\label{constraints}
\end{gather}
where 
$S_{\al\be} = R_{\al\be} - \frac{1}{2}R g_{\al\be} + \La g_{\al \be}$
is the Einstein tensor. In our case, it is not enough that
the initial data satisfies (\ref{constraints}). We also need
the compatibility conditions obtained upon restriction 
of (\ref{original_vorticity}) to the initial hypersurface, 
since initial data for $\Om$ and $C$ are derived from 
$g_0$, $\kappa$, $s_0$, $r_0$, and $v$ (see \cite{DisconziViscousFluidsNonlinearity}).
By definition, an initial data set 
always satisfies the constraints and compatibility 
conditions. While the construction of initial data 
for the Einstein-Navier-Stokes equations is an important task,
here our primary interest is on the evolution problem, and as
such we shall take the standard approach of assuming the initial data
as given (see the discussion in section \ref{restriction_vorticity}).

We are now ready to state the main result. We refer the 
reader to the standard literature in General Relativity for the terminology
employed in Theorem \ref{main_theorem}. We remind the reader of the 
definition of Gevrey spaces $\ga^{m, (\si)}$ in Section \ref{main_proof}, referring to
 references
 \cite{LerayOhyaLinear, LerayOhyaNonlinear,RodinoGevreyBook} for more details. 

\begin{theorem}
Let $\cI = (\Si, g_0, \kappa,v, s_0,  r_0)$ be an initial data set for the 
 incompressible Einstein-Navier-Stokes system  (\ref{original}) , 
 with $\Si$ compact,  $s_0 > 0$, $r_0 > 0$, and an equation of state $\ccP$
such that (\ref{sound_speed_condition}) and (\ref{temperature}) 
are satisfied. Let $\cF_\sv$ be a given smooth function of 
$\Om$, $g$ and its derivatives up to second
order, $C$ and $u$ and their derivatives up to first order, and assume that
$q > 0$.
Assume that the initial data is in 
$\ga^{(\si)}(\Si)$ for some $1 \leq \si <\frac{24}{23}$.
Then there exist a space-time $(M,g)$
that is a development of $\cI$,
 real valued  functions $s > 0$ 
and $r > 0$
defined on $M$, and a vector field $u$, 
such that  
$g \in \ga^{3, (\si)}(M)$,
$u \in \ga^{2, (\si)}(M)$, 
$s \in \ga^{2, (\si)}(M)$, 
$r \in \ga^{2, (\si)}(M)$, and  
$(g,u, s, r)$ satisfy the  incompressible 
Einstein-Navier-Stokes system  in $M$.

Furthermore, this solution satisfies the geometric uniqueness
and domain of dependence properties, in the following sense. 
Let $\cI^\prime = (\Si^\prime, g_0^\prime, \kappa^\prime, v^\prime, s_0^\prime,  r_0^\prime)$
be another initial data set, also with equation of state $\ccP$,
with corresponding development $(M^\prime,g^\prime)$ and 
solution $(g^\prime,u^\prime, s^\prime, r^\prime)$ 
of  the incompressible Einstein-Navier-Stokes equations
in $M^\prime$.
Assume  that there exists a diffeomorphism
 between $S \subset \Si$ and $S^\prime \subset \Si^\prime$ that carries
$\left. \cI\right|_S$ onto $\left. \cI^\prime \right|_{S^\prime}$, where $S$ and $S^\prime$
are, respectively, domains in $\Si$ and $\Si^\prime$. Then there exists
a diffeomorphism between $D_g(S) \subset M$
and  $D_{g^\prime}(S^\prime) \subset M^\prime$ carrying
$(g,u, s, r)$ onto
$(g^\prime,u^\prime, s^\prime, r^\prime)$, where $D_g(S)$ denotes
the future domain of dependence of $S$ in the metric $g$; in particular $D_g(S)$ and 
$D_{g^\prime}(S^\prime)$ are isometric.
\label{main_theorem}
\end{theorem}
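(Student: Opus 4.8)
The plan is to reduce the coupled system (\ref{original}) to one to which a Leray--Ohya type existence theorem for (non-strictly) hyperbolic equations in Gevrey classes applies, and then to recover a genuine solution of the geometric problem by propagation-of-constraints arguments. First I would break the diffeomorphism invariance of (\ref{original_Einstein}) by imposing wave (harmonic) coordinates, $\Box_g x^\al = 0$, which turns the Einstein equations into a quasilinear system of wave equations for the components $g_{\al\be}$, whose principal part is the scalar operator $g^{\mu\nu}\partial_\mu\partial_\nu$ acting diagonally, with right-hand side built from $T_{\al\be}$ and lower-order metric terms.

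Next I would extract evolution equations for the fluid unknowns. Rewriting $T_{\al\be}$ as in (\ref{Tab}) and imposing $\nabla^\al T_{\al\be} = 0$, the shear term contributes a second-order operator in $u$ whose leading part is, up to the projection $\pi$, of the form $F\sv\,\pi^{\rho\mu}\nabla_\rho\nabla_\mu u$; this is precisely the parabolic-like, non-strictly-hyperbolic feature that forces the passage to Gevrey classes. The incompressibility (\ref{original_incompressible}), the normalization (\ref{original_normalization}), and the mass conservation (\ref{original_mass_conservation}) are then used to close the system for $(u,r,s)$, taking $s$ and $F$ as the independent thermodynamic variables via $r=r(F,s)$ as in the discussion preceding (\ref{sound_speed_condition}); the terms $\partial_\al(\theta F)$ appearing in (\ref{original_vorticity}) must be processed using the first law (\ref{flt2}). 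The vorticity is governed by the prescribed evolution (\ref{original_vorticity}), and since $q>0$ the terms $q\,u^\mu\nabla_\mu\nabla_\al C_\be$ supply highest-order derivatives of $C$ along the flow and therefore belong to the principal symbol rather than to the forcing.

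The crux of the argument --- and the step I expect to be hardest --- is the symbol computation establishing that the total reduced operator is Leray--Ohya hyperbolic and identifying the admissible Gevrey index. I would compute the characteristic determinant of the full principal part, expecting it to factor into the metric wave operator, a transport factor associated with $u^\mu\partial_\mu$, and the degenerate shear/vorticity factor coming from $F\sv\,\pi^{\rho\mu}\partial_\rho\partial_\mu$ together with the $q$-terms. One must then verify that this polynomial is a product of hyperbolic factors with real characteristics, count the multiplicities of the repeated roots, and thereby certify well-posedness in $\ga^{(\si)}$ for $\si$ below a threshold fixed by those multiplicities; the stated range $\si<\tfrac{24}{23}$ should be exactly this threshold, while the regularity indices $3,2,2,2$ for $g,u,s,r$ reflect the differential orders of their respective equations. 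Once the system is placed in Leray--Ohya form, the corresponding theorem yields a unique local solution in the claimed Gevrey spaces, together with the finite-propagation-speed property.

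Finally I would recover a solution of the original system. I would show that the wave-coordinate condition propagates, so that the reduced Einstein equations imply the full ones; that the constraints (\ref{constraints}) are preserved by the evolution, via the contracted Bianchi identity applied to (\ref{original_conservation}); and that the compatibility conditions obtained by restricting (\ref{original_vorticity}) to $\Si$ guarantee consistency of the prescribed vorticity evolution with the data on $\Si$. The domain-of-dependence assertion then follows from the finite speed of propagation intrinsic to Leray--Ohya hyperbolicity, and geometric uniqueness follows by the standard argument comparing the two developments through the given diffeomorphism matching $\left.\cI\right|_S$ with $\left.\cI^\prime\right|_{S^\prime}$ and patching the solutions on the common domain of dependence $D_g(S)$.
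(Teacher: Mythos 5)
Your overall skeleton (harmonic gauge, reduction to a Leray--Ohya system, factorization of the characteristic determinant into hyperbolic polynomials, recovery of the geometric problem) is the right one, but there is a genuine gap at the step you yourself flag as the crux. You propose to close the system for $(g,u,r,s)$ directly, with the shear term $F\sv\,\pi^{\rho\mu}\nabla_\rho\nabla_\mu u$ and the $q$-terms of (\ref{original_vorticity}) entering the principal symbol. As written, that system is not in Leray--Ohya form: the divergence of the shear produces second derivatives of $F$ (hence of the thermodynamic variables) along the flow, and the projections $\pi_\al^\rho\pi_\be^\mu$ prevent the principal part in $u$ from reducing to a clean wave operator. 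The paper's essential structural move, which is absent from your proposal, is to enlarge the unknown to $V=(g,s,u,\Om,C)$, treating the dynamic velocity $C$ and the vorticity $\Om$ as \emph{independent} fields, and to adjoin two auxiliary equations: the transport-type equation (\ref{Omega_equation}) for $\Om$ coming from (\ref{original_vorticity}), and a wave-type equation (\ref{C_equation}) for $C$ obtained from the Hodge identity $\Delta C = d\de C + \de dC = \de\Om$ together with incompressibility $\de C=0$. It is only for this enlarged $25\times 25$ system, with the Leray indices $m_I=(3,2,2,1,2)$ and $n_J=(1,0,0,0,0)$, that the derivative counting closes, the characteristic determinant factors as in (\ref{product_hyp_pol}) into $24$ hyperbolic polynomials of degree at most three, and the threshold $\si_0=\tfrac{24}{24-1}=\tfrac{24}{23}$ emerges. (Note also that the bound on $\si$ in this theory is $q/(q-1)$ with $q$ the \emph{number of hyperbolic factors} in the decomposition, not a multiplicity of repeated roots; your phrasing conflates the two.) Without the auxiliary equation for $C$ one cannot even pose the question of hyperbolicity of the reduced system, since the second derivatives of $C$ appearing in (\ref{u_equation}) and (\ref{Omega_equation}) would be unaccounted for. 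A further nontrivial piece you omit is the verification that the $10\times 10$ block coupling $\Om$ and $C$ has a characteristic determinant whose quartic factor $P(\xi)$ splits into two real hyperbolic quadratics; this requires the discriminant computation $B^2-4AC=q^2\xi_3^2(\xi^2-\xi^3)^2$ and the positivity estimates (\ref{inequality_Min_1})--(\ref{inequality_Min_2}), which use $q>0$ and $F\geq 1$ in an essential way.

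Secondarily, your recovery step differs from the paper's and is itself incomplete as stated. Propagation of the harmonic gauge and of the Hamiltonian/momentum constraints is standard, but one must also show that the solution of the enlarged system satisfies $C_\al=Fu_\al$ and $\Om_{\al\be}=\nabla_\al C_\be-\nabla_\be C_\al$ for positive times, since these relations were only imposed on the initial data. The paper does not derive propagation equations for these constraint differences; instead it invokes Pichon's analytic existence result for the original system, approximates the Gevrey data by analytic data, and passes to the limit using the Leray--Ohya estimates, so that the limit inherits the algebraic relations from the analytic approximants. If you wish to keep a direct constraint-propagation argument you would need to exhibit a homogeneous hyperbolic system satisfied by $C_\al-Fu_\al$ and $\Om_{\al\be}-(\nabla_\al C_\be-\nabla_\be C_\al)$, which is not obviously available here.
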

\begin{remark}
Under the further assumption that the fluid is irrotational, 
Theorem (\ref{main_theorem}) was proved by the second author in
\cite{DisconziViscousFluidsNonlinearity}, where a  better regularity result than
in theorem \ref{main_theorem}, namely, $\si < 2$, was obtained.
\end{remark}
\begin{remark}
The space-time $M$ is diffeomorphic to $\Si \times [0,T]$ for some $T > 0$, 
and to $\Si \times [0,\widetilde{T})$ for some $\widetilde{T} > T$
 if we require it to be a maximal Cauchy 
development.
\end{remark}
\begin{remark}
 The compactness of $\Si$
is not absolutely necessary due to the domain of dependence property.  However, in the case of a 
non-compact $\Si$ without asymptotic conditions on the initial data, $M$ may not contain any Cauchy surface other than 
$\Si$ itself. 
\end{remark}
\begin{remark}
The hypotheses $s_0>0$ and $r_0 > 0$ guarantee, by continuity, 
the positivity of $s$ and $r$ in the neighborhood of $\Si$, as stated in the theorem.
The assumption $s_0 > 0$ could be weakened to $s_0 \geq 0$, but in this case 
the non-negativity of $s$ in $M$ would have to be derived from the equations of
motion, a task we avoid for brevity. On the other hand, allowing $r_0$ to vanish would cause severe
difficulties.  In fact, the well-posedness of the corresponding problem is largely 
open even in the case of an ideal fluid \cite{FriRenCauchy}.
\end{remark}

In the following, we adopt:

\begin{convention}Greek indices run from $0$ to $3$ and Latin indices
from $1$ to $3$. 
\end{convention}

\section{Discussion on the hypotheses and the thesis of Theorem \ref{main_theorem}}
 
In this section we comment on the relevance 
of Theorem \ref{main_theorem} for the study of relativistic fluids
with viscosity. We highlight the restrictions imposed by (\ref{evolution_vorticity})
and the regularity of solutions,
make some remarks regarding the physical content of the Theorem, and discuss 
how this work fits within the broader context of relativistic viscous fluids,
making some general remarks about (\ref{T_vis}) and its particular case (\ref{T_vis_i})
along the way.
Readers interested solely in the proof of Theorem \ref{main_theorem} may skip this section.

\subsection{The evolution of the vorticity.\label{restriction_vorticity}}
Let us start with the evolution condition imposed on $\Om$, i.e.,
equation (\ref{evolution_vorticity}) or, equivalently, (\ref{original_vorticity}). 
In its full-generality, 
the incompressible Einstein-Navier-Stokes system consists
of equations (\ref{original_Einstein})-(\ref{original_incompressible}) and
(\ref{original_normalization}), i.e., (\ref{original}) without
 (\ref{original_vorticity}).
As such a system is rather complicated, we have imposed
(\ref{evolution_vorticity}) which, of course, is ultimately
a restriction on the unknowns $(g,u,s,r)$. From
the point of view of 
(\ref{original_Einstein})-(\ref{original_incompressible}), equation  
(\ref{original_vorticity}) should be understood as a constraint, 
in the following sense. An initial data set 
yielding a solution to 
(\ref{original_Einstein})-(\ref{original_incompressible}) (plus
(\ref{original_normalization})) 
will also give a solution to (\ref{original}) only if 
further relations among the initial data hold. 
Indeed, arguing as in  \cite{DisconziViscousFluidsNonlinearity},
one determines, from the original Cauchy data and Einstein's equations, 
the values of $\partial^2g$, $\partial s$, $\partial u$, $\Om$, and $\partial^2 C$
(as well as the corresponding lower order derivatives) on the initial hypersurface $\{ t = 0 \}$,
obtaining 
a relation of the form  $\partial_0 \Om_{\al\be} = W_{\al \be}$
 on $\{ t = 0 \}$, where
$W_{\al \be}$ is a function of the Cauchy data.
From (\ref{evolution_vorticity}), one also obtains 
a relation $\partial_0 \Om_{\al\be} = Z_{\al \be}$
 on $\{ t = 0 \}$, with  $Z_{\al \be}$  a function of the Cauchy data.
Therefore the initial data must be such data $W_{\al\be} = Z_{\al\be}$,
and hence Theorem \ref{main_theorem} is ultimately a result under restrictions on the initial data, namely,
the initial data ought to satisfy 
compatibility 
conditions imposed by (\ref{original_vorticity}), as mentioned 
earlier\footnote{We notice that a similar, albeit notably simpler, situation happens to perfect 
fluids with zero vorticity. In that case, the equation for the vorticity is given by
$\cL_C \Om = 0$, where $\cL$ is the Lie derivative. This is not compatible with 
the equations derived from the divergence of the stress-energy tensor unless the initial data
is such that $\Om = 0$ on the $\{ t = 0\}$ slice.}. 

How large is the class of initial conditions
satisfying the above restrictions is by no means an unimportant question,
but one that is, at this point, premature,
since we do not even know whether the system (\ref{original_Einstein})-(\ref{original_incompressible}) and
(\ref{original_normalization}) has any solution at all
outside the class of analytic functions. 
While this leaves open the question of how general 
Theorem \ref{main_theorem} is,
it is consistent with some expected physical applications as
discussed in section \ref{vorticity_in_rel}. 
Such restrictions notwithstanding, we make two important remarks.

First, one could, in principle, consider the case of zero vorticity, with the function 
$\cF_\sv$ chosen so that (\ref{evolution_vorticity}) becomes an identity (notice that our results
do not rely on any specific form of $\cF_\vartheta$, except for the dependence
on the number of derivatives of its arguments). In this case, the constraints 
reduce to those that have to be imposed on the initial data when $\Om = 0$. One could
say then that our theorem reproves the earlier result 
\cite{DisconziViscousFluidsNonlinearity}. While 
obviously this is not our goal here, it at least shows the set of appropriate
initial data to be non-empty. 
The interesting question is whether the set of initial data satisfying the constraints 
and compatibility conditions is non-empty
modulo zero vorticity. This is will be addressed in a future work.

Second, in light of how little is known about viscosity in General Relativity
(see section \ref{status}), our conditional result should be view
as evidence that further investigation of Lichnerowicz's proposal (\ref{T_vis})
is a worthwhile line of inquiry. In this regard,
it is illustrative to point out that there are other situations in General 
Relativity where the evolution problem
is investigated without a decisive answer to the question of solvability of the constraints,
but this has never stopped the community to make conditional statements regarding
Einstein's equations. 
One such situation, for example, is the study of vacuum Einstein's equations with low regularity.
As discussed, for instance, in \cite{HolstMeier},
the ``rough solution theory of the
constraints has in fact lagged behind that of the evolution problem." 
For instance, well-posedness of the vacuum evolution problem for
data $(g,\kappa$) in $H^s\times H^{s-1}$, $s > \frac{5}{2}$, had been known
since 1977 \cite{Hughes_Kato_Marsden_quasi_linear_1977}. 
However, it was not until 2004 that solutions to the constraint
equations in this regularity class could be constructed \cite{CB_constraints_2004}.
Hence, for 27 years it was not known if the classical result \cite{Hughes_Kato_Marsden_quasi_linear_1977}
was not empty modulo large 
values of $s$ (for which \cite{Hughes_Kato_Marsden_quasi_linear_1977} would simply reproduce
earlier known results).

\subsection{Regularity of solutions.}
We work in the Gevrey class, because the equations
we derive form a Leray-Ohya system
(see \cite{LerayOhyaNonlinear}), which, in general, is not well-posed 
in Sobolev spaces.  Gevrey spaces have become an important tool in analyzing the 
equations of Fluid Dynamics, especially when viscosity is present 
(see, e.g., \cite{TadmorBesovGevrey,  TitiGevreyNavier, TitiGevreyParabolic, TemamGevrey, RodinoGevreyBook} and references therein). Hence, it is sensible
that such spaces might play a role in the case of relativistic
 viscous fluids as well. 
Furthermore, Gevrey spaces are not completely foreign to the study of Einstein's 
equations:
in some relevant circumstances, the equations of ideal magneto-hydrodynamics appear 
to 
have been shown to be well-posed only in the 
Gevrey class \cite{ChoquetBruhatGRBook,FriRenCauchy}\footnote{Although it is very likely that
the formulation of \cite{AnilePennisiMHD} would carry over, with almost no modifications, to the 
coupling
with Einstein's equations. A proof of this statement, however, does not
seem to be available in the literature.}. On the other hand,
the overwhelming success of Sobolev space techniques in the investigation of the 
Cauchy problem
for Einstein's equations\footnote{The literature on this topic is too vast; see, 
e.g., the monographs \cite{ChoquetBruhatGRBook, RingstromCauchyBook}.} 
almost demands that we employ Sobolev spaces in the study of the 
evolution
problem. Moreover, in order to eventually settle the question of 
whether (\ref{T_vis}) can give a physically satisfactory description
of relativistic viscous phenomena, we have to be able to explicitly compute several 
physical
observables. For this, one has to solve the equations numerically, which, in turn,
requires that the equations be well-posed in some function space characterized by 
a finite number of derivatives.

Unfortunately, currently Gevrey regularity seems to be the best one can do for (\ref{T_vis}),
as the corresponding equations of motion do not seem to be amenable to known Sobolev-type techniques.
We remark, however, that simply establishing causality of the equations of motion is already 
a step forward in light of the long history of non-causal theories of relativistic viscous fluids
\cite{RezzollaZanottiBookRelHydro}.

\subsection{The status of viscosity in relativity\label{status} and Lichnerowicz's proposal}

In spite of the restriction on the initial data due to (\ref{evolution_vorticity}),
the severity of which we acknowledge is yet
to be understood, and on the regularity class of solutions, 
one should not overlook the 
conclusion of Theorem \ref{main_theorem}: it is possible, employing
the traditional Standard Irreversible Thermodynamics, to obtain 
a description of relativistic viscous fluids that is well-posed and 
does not 
exhibit faster than light signals. In this regard, we
remind the reader once more that we are attempting
a new look at this problem through a first order formalism.
Hence, it is all but unreasonable to start off with conditions that
render the problem tractable with current mathematical technology.
The first attempt in this direction \cite{DisconziViscousFluidsNonlinearity} dealt with irrotational
fluids. Here, we considered a less dramatic condition on the vorticity,
namely, (\ref{evolution_vorticity}), which seems to be compatible
with some physical applications (see section \ref{vorticity_in_rel}).
The message conveyed by this is that,
while it is wide open whether a full existence result for the incompressible Einstein-Navier-Stokes may be within reach without restrictions on the vorticity,
one can still prove well-posedness results under interesting scenarios.

Another restriction in our Theorem that one would like to remove is the
incompressibility hypothesis, not only for the sake of mathematical generality,
but also because relativistic systems many times exhibit sound waves that
propagate at sub-luminal speeds. This is the subject of ongoing investigations.

In order to put all of the above in perspective, we give rather brief 
overview of what is currently known about viscosity in relativity.

The Mueller-Israel-Stewart (MIS) theory
\cite{MIS-2, MIS-3, MIS-5, MIS-6, MIS-1, MIS-4}
is probably the best accepted theory of relativistic viscous phenomena. It consists of 
a systematic application of the ideas of Relativistic Extended Irreversible Thermodynamics
\cite{JouetallBook, MuellerRuggeriBook}.
The linearization about equilibrium states of the MIS theory
has been shown to be causal \cite{Hiscock_Lindblom_stability_1983}.
The non-linear theory, however, is also plagued
with non-causality behavior \cite{Hiscock_Lindblom_pathologies_1988}.
To be fair,  such loss of causality is known to happen under extreme physical conditions 
unlikely to be met by most realistic systems.
More precisely, in \cite{Hiscock_Lindblom_pathologies_1988}
the authors investigate 
the relatively simple case 
where only heat conduction is 
present, so that the bulk and shear viscosity are zero, and under the assumption of planar symmetry. 
 Under
these assumptions, it is shown in \cite{Hiscock_Lindblom_pathologies_1988} that the equations
of motion are causal under a restriction on the values of the heat-flux, and non-causal 
if such a restriction is violated\footnote{In passing, one should note that the MIS is sometimes referred to 
as ``causal dissipative relativistic theory," but strictly speaking that
is, in view of what has been said, a misnomer.}.
In contrast, Theorem \ref{main_theorem}, as well as  \cite{DisconziViscousFluidsNonlinearity},
makes no symmetry or near-equilibrium assumption, and treats the full non-linear system,
albeit it assumes stiffness and stringent restrictions on the initial data (or irrotationality
in the case of \cite{DisconziViscousFluidsNonlinearity}). It is important to notice, 
however, that from the point
of view of causality, such results treat precisely the most ``dangerous" scenario, i.e., they
include  
the term
\begin{gather}
 \pi_\al^\rho \pi_\be^\mu 
(\nabla_\rho C_\mu + \nabla_\mu C_\rho),
\nonumber
\end{gather}
which leads to multiple characteristic due to the presence 
of the projections $\pi_\al^\rho \pi_\be^\mu$. The causality obtained in 
\cite{Hiscock_Lindblom_pathologies_1988}, on the other hand, is restricted to the case
when the viscous part of the stress-energy tensor contains only the heat flow; in particular, such 
projection terms are absent.

We also point out that, to the best of our knowledge, 
the aforementioned causality and well-posedness results of the MIS 
theory \cite{Hiscock_Lindblom_stability_1983, Hiscock_Lindblom_pathologies_1988} do not include
coupling to Einstein's equations, i.e., they consider the fluid equations in a fixed background
(except for some very simple situations such as FRW cosmologies \cite{MaartensDissipative}),
whereas Theorem \ref{main_theorem} and  \cite{DisconziViscousFluidsNonlinearity} do 
treat the full Einstein-fluid system.

Another interesting feature of Theorem \ref{main_theorem} is that it circumvents the
instability results of Hiscock and Lindblom
\cite{Hiscock_Lindblom_instability_1985}. 
In fact, formally the equations that we study here
correspond to the case  $\kappa = \si=0$ in \cite{Hiscock_Lindblom_instability_1985}.
Equations (\ref{incompressible_def}) and (\ref{evolution_vorticity}), however, 
further constrain the evolution of perturbations  
(compare with equations (41) in \cite{Hiscock_Lindblom_instability_1985}).
On the other hand, if condition (\ref{incompressible_def}) is dropped, then
the term $\nabla_\mu C^\mu$ that contributes 
to the viscous part in (\ref{T_vis}) will depend on derivatives of the termodynamic
variables along the flow, a case not covered under the assumptions of
\cite{Hiscock_Lindblom_instability_1985}.

One important question about theories based on (\ref{T_vis}) is whether natural physical requirements
are satisfied. One such requirement is that entropy production be non-negative. It is not difficult
to see that, at least for the case investigated here, namely, when (\ref{T_vis}) reduces to
(\ref{T_vis_i}), this is the case.  To see this, 
one first uses $u^\be \nabla^\al T_{\al\be} = 0$ and the first law of thermodynamics to derive
\begin{gather}
\theta r u^\al \partial_\al s = \sv F ( \nabla^\mu u^\nu  \nabla_\mu u_\nu 
+ \nabla^\mu u^\nu  \nabla_\nu u_\mu - u^\mu \nabla_\mu u^\al u^\nu \nabla_\nu u_\al).
\nonumber
\end{gather}
On the other hand, direct computation gives
\begin{gather}
\Si^{\al\be} \Si_{\al \be} = 2F^2
( \nabla^\mu u^\nu  \nabla_\mu u_\nu 
+ \nabla^\mu u^\nu  \nabla_\nu u_\mu - u^\mu \nabla_\mu u^\al u^\nu \nabla_\nu u_\al),
\nonumber
\end{gather}
thus
\begin{gather}
\theta r u^\al \partial_\al s =  \frac{\sv}{2F} \Si^{\al\be} \Si_{\al \be} \geq 0,
\nonumber
\end{gather}
since $\Si^{\al\be} \Si_{\al \be} \leq 0$\footnote{Recall our convention for the metric.} and $\sv \leq 0$.
A detailed study of physical consistency of models based on (\ref{T_vis}) appears in 
\cite{DisconziKephartScherrerNew}
(see \cite{Disconzi_Kephart_Scherrer_2015}, however, for some further results concerning physical 
properties of (\ref{T_vis})).

In summarty, it is fair to say that despite considerable progress since the original work of Eckart
\cite{EckartViscous}, the description of relativistic viscous phenomena still presents
many challenges. These remarks are not intended to claim that Lichnerowicz's proposal is better than 
the more studied MIS theory or should be favored over any other theory, 
but rather to highlight how little is known about viscosity in General
Relativity, which makes, in our opinion, attempts at different approaches, such as those based
on (\ref{T_vis}), welcome.

\subsection{Vorticity in relativistic fluids and some physical considerations\label{vorticity_in_rel}}
In the case of inviscid fluids, the vorticity obeys \cite{Lich_fluid_1, Lichnerowicz_MHD_book}
\begin{gather}
\cL_C \Om_{\al \be} + 
\partial_\al(\theta F) \partial_\be s - \partial_\be(\theta F) \partial_\al s
= 0.
\label{vorticity_ideal}
\end{gather}
This equation, sometimes called the Lichnerowicz, or Carter-Lichnerowicz
equation, plays an important role in the study of inviscid relativistic fluids,
 and  generalizations have also been
employed in formulations with viscosity 
\cite{AndCom, ChrisMalik, Dosetal, DosTsa,
GourgIntro, Herr_axially_shear, Lewis, SorBran, TsagasBarrow}. 

Equation (\ref{evolution_vorticity}) reduces to (\ref{vorticity_ideal})
when $\cF_\sv = 0 = q$; thus, in particular,  we see that
in physically relevant models, $q$ and $\cF_\sv$ vanish when viscosity
is absent. Many of the modifications of (\ref{vorticity_ideal})
that include viscosity tend to occur in the context of very specific models,
where the equations are simple as compared to the ones here investigated
(for instance, perturbations of an FRW model). These can generally 
be accommodated by (\ref{evolution_vorticity}) with a suitable choice of 
$\cF_\sv$; see, for example, 
\cite{AndCom, ChrisMalik, Dosetal, DosTsa, Herr_axially, Herr_axially_shear, SorBran} and references therein. In more general terms, (\ref{evolution_vorticity})
seems natural when one considers applications with small viscosity,
in that the evolution of $\Om$ should, to a certain degree,
resemble that of an ideal fluid. We also point out that
 (\ref{evolution_vorticity}) is consistent with  standard cosmology (with no 
viscosity), in that, in such scenarios,  $\Om$ decays with the Hubble expansion, 
being, as a consequence, ignored in many circumstances\footnote{Although
vorticity may play an important role in early stages of the Universe.
See, for example, 
\cite{ChrisMalik} and references therein.}. Hence, one may, again,
suspect that in these cases $\Om$ will be governed by an equation
similar to that of ideal fluids, since (\ref{vorticity_ideal})
enjoys the property of preserving zero vorticity.
Our choice (\ref{evolution_vorticity})
is a compromise between the previous considerations and an 
algebraic form for which properties of hyperbolic polynomials, necessary for our techniques, hold true.
See section \ref{main_proof}.
 
\section{A new system of equations.\label{new_system_section}}
Here we derive a different system of equations, whose existence of solutions
implies Theorem \ref{main_theorem}. Thus, for the rest of this section, we assume we have a
 sufficiently differentiable solution to (\ref{original}). In particular, 
 in light of (\ref{dyn_vel}) and (\ref{original_normalization}), 
 one has
\begin{gather}
F = \sqrt{ C^\mu C_\mu},
\nonumber
\end{gather}
so that $F$ can be viewed as a function of $g$ and $C$. 
 
\begin{convention} Unless stated otherwise, we shall assume from now 
on to be working in harmonic (or wave) coordinates.
\end{convention}

\begin{notation}
Below, $B$, 
with indices attached when necessary, is used to denote expressions, where the 
maximum number of the derivatives of the variables $g$, $s$, $u$, $\Om$, $F$, 
and $C$ is
indicated in the arguments. For instance, $B(\partial g, \partial^2 s)$ indicates an 
expression depending on at most first derivatives of $g$ and second derivatives of $s$. The expression represented by $B$ can vary from equation to equation. 
\end{notation}

\subsection{Equation for $g$}
By taking the trace of \eqref{original_Einstein}  we get
\[
T:=g^{\al\be}T_{\al\be}=-R + 4\La,
\]
so we can rewrite Einstein's equation as
\beq\label{ens1}
R_{\al\be}=T_{\al\be}-\frac 12 Tg_{\al\be} + \La g_{\al\be}.
\ee
Next, recall that in harmonic coordinates, the Ricci curvature can be written as 
\beq\label{ens2}
R_{\al\be}=\frac 12 g^{\mu\nu}\partial_{\mu\nu}g_{\al\be}+B_{\al\be}(\partial g).
\ee
From \eqref{Tab} we also have
\begin{align}
T_{\al\be}&= rF u_\al u_\be - p g_{\al\be}
+F\sv\pi_\al^\rho \pi_\be^\mu (\nabla_\rho u_\mu + \nabla_\mu u_\rho)\nonumber\\
&=B_{\al\be}(\partial g, s, u, C)+ \sv\sqrt{C^\nu C_{\nu}}\pi_\al^\rho \pi_\be^\mu (\partial_\rho u_\mu + \partial_\mu u_\rho)\label{ens3}.
\end{align}
Hence
\beq\label{ens4}
T=rF-4p+\sv F\pi^{\rho\mu}(\nabla_\rho u_{\mu}+\nabla_{\mu}u_\rho)=\sv\sqrt{C^\nu C_{\nu}}\pi^{\rho\mu} (\partial_\rho u_\mu + \partial_\mu u_\rho)+B(\partial g, s, u, C).
\ee
Inserting \eqref{ens2}, \eqref{ens3}, and \eqref{ens4} into \eqref{ens1} we obtain
the following equation for $g$
\beq\label{ens5}
 g^{\mu\nu}\partial_{\mu\nu}g_{\al\be}-\sv\sqrt{C^\nu C_{\nu}}\left(2\pi_\al^\rho \pi_\be^\mu (\partial_\rho u_\mu + \partial_\mu u_\rho) -\pi^{\rho\mu} (\partial_\rho u_\mu + \partial_\mu u_\rho)g_{\al\be}\right)+B_{\al\be}(\partial g, s, u, C)    =0.
\ee

\subsection{Equation for $s$}
From \eqref{original_conservation}, \eqref{original_mass_conservation}, \eqref{flt2} by considering $u^\be\nabla^\al T_{\al\be}$ with $T_{\al\be}$ as in \eqref{Tab}, we obtain
\[
r\theta u^\al \partial_\al s-\sv F(\nabla_\mu u_\nu +\nabla_\nu u_\mu)\pi^{\al \mu}\nabla_\al u^\nu=0,
\]
where we also used \eqref{useful} and that $u^\be\pi^\mu_\beta=0$.

To obtain the desired quasi-linear structure we apply $u^\si\nabla_\si$ to the equation.  This results in 
\beq\label{s1}
\begin{split}
u^\si u^\al \partial_{\al\si} s &- \sv\frac{\sqrt{C^\rho C_\rho}}{\theta r} \pi^{\al\mu}u^\si \partial_\al u^\nu (\partial_{\mu \si}u_\nu+\partial_{\si\nu}u_\mu)- \sv\frac{\sqrt{C^\rho C_\rho}}{\theta r}  (\partial_{\mu}u_\nu+\partial_{\nu}u_\mu)\pi^{\al\mu}u^\si \partial_{\al\si} u^\nu\\
& + B(\partial^2 g, \partial s, \partial C, \partial u)=0.
\end{split}
\ee
We note that in the derivation of $B$ in \eqref{s1} at first one obtains derivatives in $\theta$ and $r$, which get replaced by $\partial s$, and $\partial F$.  Then in view of the comment at the beginning of this section, $\partial F$ gets replaced by  $\partial C$ and $\partial g$.

\subsection{Equation for $u$.}

The derivation of the equation for $u$ is long, requiring several
calculations. We shall break them into short claims in order to 
facilitate the reading.

Since $F >0$, inspired by \cite{LichnerowiczBookGR} we can define the conformal metric
\begin{gather}
\overline{g} = F^2 g,
\nonumber
\end{gather}
and denote by $\overline{\nabla}$ covariant differentiation in the $\overline{g}$-metric.
We also let
\begin{gather}
\overline{C}^\al = F^{-1} u^\al,
\label{C_bar_inv}
\end{gather}
i.e., $\overline{C}^\al$ is $C_\al$ with index raised in the $\overline{g}$ metric, so
that 
\begin{gather}
\overline{C}^\al C_\al = 1.
\label{unit_C_bar}
\end{gather}
It also follows that
\begin{gather}
\Om_{\al\be} = \overline{\nabla}_\al C_\be - \overline{\nabla}_\be C_\alpha.
\label{vorticity_nabla_bar}
\end{gather}
If $v$ is a one-form, a direct calculation gives
\begin{gather}
\overline{\nabla}_\al v_\be = 
\nabla_\al v_\be - K_\al v_\be - K_\be v_\al + K^\rho v_\rho g_{\al\be},
\label{relation_cov_der}
\end{gather}
where $K_\al = \partial_\al \log F = \frac{\partial_\al F}{F}$.
The following standard identities will also be needed,
\begin{gather}
\nabla_\al \nabla_\be v^\la - \nabla_\be \nabla_\al v^\la 
= R_{\al\be \mss \ga}^{\msb \la} v^\ga ,
\nonumber
\end{gather}
from which it follows
\begin{gather}
\nabla_\al \nabla_\be v^\al - \nabla_\be \nabla_\al v^\al  
= R_{\be\ga} v^\ga.
\label{contracted_Ricci_id}
\end{gather}

To derive the equation for $u$ we need to compute the divergence of $T_{\al\be}$.
For this it will be convenient to set
\begin{gather}
\Si_{\al\be} =
\pi_\al^\mu  \pi_\be^\nu (\nabla_\mu  C_\nu + \nabla_\nu C_\mu ),
\label{shear_C}
\end{gather}
which can be written as
\begin{gather}
\Si_{\al\be} =
F \pi_\al^\mu  \pi_\be^\nu (\nabla_\mu  u_\nu + \nabla_\nu u_\mu ),
\label{shear_u}
\end{gather}
since  $u^\be\pi^\mu_\beta=0$. $\Si_{\al\be}$ is sometimes called the shear tensor.


\begin{claim}
Let
\begin{gather}
\overline{\Si}_{\al\be} = \overline{\nabla}_\al C_\be + 
\overline{\nabla}_\be C_\al - \overline{C}^\la( \overline{\nabla}_\la C_\al C_\be +
\overline{\nabla}_\la C_\be C_\al ).
\nonumber
\end{gather}
Then the following two identities hold:
\begin{gather}
\overline{\Si}_{\al\be} = \Si_{\al\be} +  2 \pi_{\al\be} u^\rho \partial_\rho F,
\nonumber
\end{gather}
and
\begin{gather}
\overline{\Si}_{\al\be} = 2 \overline{\nabla}_\be C_\al + \Theta_{\al\be},
\nonumber
\end{gather}
where
\begin{gather}
\Theta_{\al\be} = \Om_{\al\be} - u^\la(\Om_{\la \al} u_\be + \Om_{\la  \be} u_\al ).
\nonumber
\end{gather}
\label{claim_1}
\end{claim}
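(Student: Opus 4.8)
The plan is to establish the two identities by direct computation, converting everything into the original $\nabla$ connection via the conformal relation \eqref{relation_cov_der} and exploiting the algebraic constraints $u^\al u_\al = 1$, $\overline{C}^\al C_\al = 1$, and $u^\be \pi^\mu_\be = 0$. I will treat the two identities in turn, since the first expresses $\overline{\Si}$ in terms of the physical shear $\Si$ (relating the conformal and original pictures), while the second expresses $\overline{\Si}$ purely in the conformal connection plus a vorticity correction.

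For the first identity, I would start from the definition of $\overline{\Si}_{\al\be}$ and apply \eqref{relation_cov_der} to each term $\overline{\nabla}_\al C_\be$, picking up the correction terms built from $K_\al = \partial_\al \log F$. Since $\overline{C}^\al = F^{-1} u^\al$ and $C_\al = F u_\al$, I would rewrite the contraction $\overline{C}^\la \overline{\nabla}_\la C_\al$ in terms of $u^\la \nabla_\la$ and the $K$-terms, using \eqref{useful} to kill the terms where $u^\la \nabla_\la u_\al$ contracts against $u^\al$. The key bookkeeping is that the symmetric combination $\overline{\nabla}_\al C_\be + \overline{\nabla}_\be C_\al$ produces, after substitution, the symmetrized $\nabla_\al C_\be + \nabla_\be C_\al$ together with extra terms proportional to $K$, $C$, $g_{\al\be}$, and $u^\rho \partial_\rho F$; the projected version of this then matches $\Si_{\al\be}$ defined in \eqref{shear_C}, while the leftover trace-type pieces assemble into $2\pi_{\al\be} u^\rho \partial_\rho F$. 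Here one crucially uses $\pi_{\al\be} = g_{\al\be} - u_\al u_\be$ and that $K_\al = \partial_\al F / F$ to convert the $F$-derivatives.

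For the second identity, I would again expand $\overline{\Si}_{\al\be}$, but this time I would not convert to $\nabla$; instead I would reorganize the symmetric part $\overline{\nabla}_\al C_\be + \overline{\nabla}_\be C_\al$ by writing $\overline{\nabla}_\al C_\be = \overline{\nabla}_\be C_\al + \Om_{\al\be}$, using \eqref{vorticity_nabla_bar}. This immediately yields $2\overline{\nabla}_\be C_\al + \Om_{\al\be}$ from the first two terms. The remaining piece $-\overline{C}^\la(\overline{\nabla}_\la C_\al C_\be + \overline{\nabla}_\la C_\be C_\al)$ I would handle by similarly decomposing each $\overline{\nabla}_\la C$ into its symmetric-in-$\la$ and antisymmetric-in-$\la$ parts, the antisymmetric part being a vorticity term; the contractions of the symmetric parts against $\overline{C}^\la$ should vanish after differentiating the normalization \eqref{unit_C_bar}, leaving exactly the vorticity contributions $-u^\la(\Om_{\la\al} u_\be + \Om_{\la\be} u_\al)$ once $\overline{C}^\la C$-factors are rewritten in terms of $u^\la$ and $u$. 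Collecting the $\Om$-terms then produces $\Theta_{\al\be}$ as defined.

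The main obstacle I anticipate is the careful tracking of the conformal correction terms and of the index-raising conventions: since $\overline{C}^\al$ is $C_\al$ raised in $\overline{g}$ while $u^\al$ is raised in $g$, the factors of $F$ must be distributed consistently, and it is easy to misplace a term between the symmetric shear part and the trace/vorticity part. The cleanest route is probably to first differentiate the two normalizations $\overline{C}^\la C_\la = 1$ and $u^\al u_\al = 1$ to record the identities $\overline{C}^\la \overline{\nabla}_\be C_\la = 0$ and $u^\al \nabla_\be u_\al = 0$ (the latter is \eqref{useful}), and to use these systematically to eliminate the terms that would otherwise obstruct the identification. With these contraction identities in hand, the remaining manipulations are routine symmetrization and relabeling of dummy indices.
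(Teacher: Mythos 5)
Your proposal follows essentially the same route as the paper's proof: the first identity is obtained by converting $\overline{\Si}_{\al\be}$ to the $\nabla$-connection via \eqref{relation_cov_der} and recognizing the projected combination $\pi_\al^\mu\pi_\be^\nu(\nabla_\mu u_\nu+\nabla_\nu u_\mu)$ by means of \eqref{useful}, and the second by trading $\overline{\nabla}_\al C_\be$ for $\overline{\nabla}_\be C_\al+\Om_{\al\be}$ and using the normalization identity $\overline{C}^\la\overline{\nabla}_\al C_\la=0$. One correction to your description of the second identity: what \eqref{unit_C_bar} kills is the index-swapped term $\overline{C}^\la\overline{\nabla}_\al C_\la$ (the term in which the derivative index is \emph{not} contracted against $\overline{C}$), not the contraction of the symmetric part of $\overline{\nabla}_\la C_\al$ against $\overline{C}^\la$; the latter equals $\tfrac{1}{2}\overline{C}^\la\Om_{\la\al}$ and does not vanish, so a literal symmetric/antisymmetric split with the symmetric contraction discarded would lose a factor of $2$ in the $\Theta$-term. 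The clean step, as in the paper, is $\overline{C}^\la\overline{\nabla}_\la C_\al=\overline{C}^\la\overline{\nabla}_\al C_\la+\overline{C}^\la\Om_{\la\al}=\overline{C}^\la\Om_{\la\al}=u^\la\Om_{\la\al}$, after which $\Theta_{\al\be}$ is assembled exactly as you indicate.
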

\begin{proof}
Using (\ref{relation_cov_der}), (\ref{dyn_vel}), (\ref{C_bar_inv}), a long
but not difficult calculation gives
\begin{gather}
\overline{\Si}_{\al\be} =
F(\nabla_\al u_\be + \nabla_\be u_\al - u_\be u^\nu \nabla_\nu u_\al 
- u_\al u^\mu \nabla_\mu 
u_\be) + 2 \pi_{\al\be} u^\rho \partial_\rho F.
\nonumber
\end{gather}
But in light of (\ref{useful}) we have
\begin{gather}
\pi_\al^\mu \pi_\be^\nu(\nabla_\mu u_\nu + \nabla_\nu u_\mu) 
= 
\nabla_\al u_\be + \nabla_\be u_\al - u_\be u^\nu \nabla_\nu u_\al 
- u_\al u^\mu \nabla_\mu 
u_\be ,
\nonumber
\end{gather}
so that (\ref{shear_u}) gives
\begin{gather}
\overline{\Si}_{\al\be} = \Si_{\al\be} +  2 \pi_{\al\be} u^\rho \partial_\rho F.
\nonumber
\end{gather}
For the second identity, use (\ref{vorticity_nabla_bar}) to get
\begin{align}
\begin{split}
\overline{\Si}_{\al\be} & = \overline{\nabla}_\al C_\be + 
\overline{\nabla}_\be C_\al - \overline{C}^\la( \overline{\nabla}_\la C_\al C_\be +
\overline{\nabla}_\la C_\be C_\al ) \\
& = \Om_{\al \be }+ 
2 \overline{\nabla}_\be C_\al - \overline{C}^\la( \Om_{\la\al} C_\be +
\Om_{\la\be} C_\al ) - \overline{C}^\la( \overline{\nabla}_\al C_\la C_\be + 
\overline{\nabla}_\be C_\la C_\al ).
\end{split}
\nonumber
\end{align}
The result now follows by noticing that 
(\ref{unit_C_bar}) gives $\overline{C}^\la \overline{\nabla}_\al C_\la = 0 
= \overline{C}^\la  \overline{\nabla}_\be  C_\la$ and using (\ref{dyn_vel}).
\nonumber
\end{proof}

\begin{claim}
We have
\begin{gather}
\pi^{\ga\be} \nabla_\al \overline{\Si}^\al_{\mss \be } =
- 2 \pi^{\ga\be} K^\al \Om_{\al\be} + \pi^{\ga\be} \nabla_\al \Theta^\al_{\mss \be}. 
\nonumber
\end{gather}
\label{claim_2}
\end{claim}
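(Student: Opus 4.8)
The plan is to start from the second identity of Claim~\ref{claim_1}, namely $\overline{\Si}_{\al\be} = 2\overline{\nabla}_\be C_\al + \Theta_{\al\be}$, raise the index $\al$ with $g$, apply the $g$-divergence $\nabla_\al$, and contract with $\pi^{\ga\be}$. Since the $\Theta$-term then appears identically on both sides of the asserted equation, this reduces the whole claim to the single identity
\begin{gather}
\pi^{\ga\be}\nabla_\al\overline{\nabla}_\be C^\al = -\pi^{\ga\be}K^\al\Om_{\al\be}.
\nonumber
\end{gather}

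To prove it I would first trade the conformal derivative for the background one via (\ref{relation_cov_der}); raising the free index gives
\begin{gather}
\overline{\nabla}_\be C^\al = \nabla_\be C^\al - K_\be C^\al - K^\al C_\be + (K^\rho C_\rho)\,\delta^\al_\be .
\nonumber
\end{gather}
Applying $\nabla_\al$ and contracting, I would simplify the resulting terms using three ingredients: incompressibility (\ref{original_incompressible}), i.e.\ $\nabla_\al C^\al = 0$, which annihilates every term containing a divergence of $C$; the contracted Ricci identity (\ref{contracted_Ricci_id}), which turns $\nabla_\al\nabla_\be C^\al$ into $R_{\be\rho}C^\rho$; and the fact that $K_\al = \partial_\al\log F$ is closed, so its Hessian $\nabla_\al K_\be$ is symmetric. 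The symmetry cancels two of the three terms carrying a derivative of $K$, the two terms in which $K$ is undifferentiated recombine, through the definition (\ref{vorticity}) of $\Om$, into $-K^\al\Om_{\al\be}$, and one is left with
\begin{gather}
\nabla_\al\overline{\nabla}_\be C^\al = R_{\be\rho}C^\rho - (\nabla_\al K^\al)C_\be - K^\al\Om_{\al\be}.
\nonumber
\end{gather}
Contracting with $\pi^{\ga\be}$ kills the middle term at once, because $C_\be = F u_\be$ and $\pi^{\ga\be}u_\be = 0$ by (\ref{original_normalization}).

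The step I expect to be the genuine obstacle is the surviving curvature term $\pi^{\ga\be}R_{\be\rho}C^\rho$, which does not appear in the statement and must therefore be shown to vanish. This is where the identity stops being purely differential-geometric: one has to invoke Einstein's equations in the form (\ref{ens1}), $R_{\be\rho} = T_{\be\rho} - \frac{1}{2}T g_{\be\rho} + \La g_{\be\rho}$, and check that $R_{\be\rho}C^\rho$ is proportional to $u_\be$. Indeed, contracting with $C^\rho = F u^\rho$, the trace and cosmological pieces are multiples of $C_\be$; the perfect-fluid part of $T_{\be\rho}u^\rho$ collapses to $(rF - p)u_\be$ by (\ref{original_normalization}); and the viscous part $\sv\Si_{\be\rho}u^\rho$ vanishes identically thanks to the projectors in (\ref{shear_C}), since $\pi_\rho^\mu u^\rho = 0$. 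Hence $R_{\be\rho}C^\rho$ is parallel to $u_\be$ and is annihilated by $\pi^{\ga\be}$. Removing this term yields the reduced identity above, and substituting back into $\pi^{\ga\be}\nabla_\al\overline{\Si}^\al_{\mss\be}$ completes the argument. The delicate point is thus recognizing that the equations of motion, rather than geometry alone, are what cause the Ricci contribution to drop out.
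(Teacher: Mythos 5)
Your proof is correct and follows essentially the same route as the paper: start from the second identity of Claim \ref{claim_1}, convert $\overline{\nabla}$ to $\nabla$ via (\ref{relation_cov_der}), use incompressibility and the contracted Ricci identity, and then invoke Einstein's equations together with the projector structure of $T_{\al\be}$ to see that $R_{\be\rho}C^\rho$ is parallel to $C_\be$ and hence killed by $\pi^{\ga\be}$. The intermediate identity you derive is exactly the paper's $\nabla_\al \overline{\Si}^\al_{\mss \be } = 2 R_{\al\be}C^\al - 2 \nabla_\al K^\al C_\be - 2 K^\al \Om_{\al \be } + \nabla_\al \Theta^\al_{\mss \be}$, and you correctly identified the use of the field equations as the non-geometric step.
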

\begin{proof}
Using claim \ref{claim_1},  (\ref{relation_cov_der}), (\ref{contracted_Ricci_id}), 
and (\ref{original_incompressible}), 
one gets
\begin{align}
\begin{split}
\nabla_\al \overline{\Si}^\al_{\mss \be } & =
2 \nabla_\al \nabla_\be C^\al - 2 \nabla_\al K^\al C_\be - 2 K^\al \Om_{\al \be }
+ \nabla_\al \Theta^\al_{\mss \be} \\
& = 2 R_{\al\be}C^\al - 2 \nabla_\al K^\al C_\be - 2 K^\al \Om_{\al \be }
+ \nabla_\al \Theta^\al_{\mss \be}.
\end{split}
\nonumber
\end{align}
Now, from (\ref{ens1}), the fact that $\pi_{\al\be} C^\al = F \pi_{\al\be}u^\al = 0$,
and the form of $T_{\al\be}$, it follows that
\begin{gather}
R_{\al\be}C^\al = (rF - p -\frac{1}{2}T)C_\be,
\nonumber
\end{gather}
so that 
\begin{align}
\begin{split}
\nabla_\al \overline{\Si}^\al_{\mss \be } & =
(2rF - 2p - T - 2 \nabla_\al K^\al )C_\be   - 2 K^\al \Om_{\al \be }
+ \nabla_\al \Theta^\al_{\mss \be},
\end{split}
\nonumber
\end{align}
from which the claim follows upon contracting with $\pi^{\ga\be} $ and using again 
$\pi_{\al\be} C^\al = 0$
\end{proof}

\begin{claim} We have
\begin{gather}
2 u^\al \pi^{\ga\rho}\nabla_\al \partial_\rho F =
- 2 \pi^{\ga\be} K^\al \Om_{\al\be} + \pi^{\ga\be}\nabla_\al\Theta^\al_{\mss\be}
+ B^\ga(\partial g, \partial s, \partial F, \partial u).
\nonumber
\end{gather}
\label{claim_3}
\end{claim}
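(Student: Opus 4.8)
The plan is to read the right-hand side of the asserted identity straight off Claim \ref{claim_2}: that claim states exactly that $-2\pi^{\ga\be}K^\al\Om_{\al\be}+\pi^{\ga\be}\nabla_\al\Theta^\al_{\mss\be}$ equals $\pi^{\ga\be}\nabla_\al\overline{\Si}^\al_{\mss\be}$, so it suffices to establish
\begin{gather}
2 u^\al \pi^{\ga\rho}\nabla_\al \partial_\rho F = \pi^{\ga\be}\nabla_\al\overline{\Si}^\al_{\mss\be} + B^\ga(\partial g, \partial s, \partial F, \partial u).
\nonumber
\end{gather}
To produce such a relation I would differentiate the first identity of Claim \ref{claim_1}, namely $\overline{\Si}_{\al\be}=\Si_{\al\be}+2\pi_{\al\be}u^\rho\partial_\rho F$. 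Raising the first index with $g$, applying $\nabla_\al$, and contracting with $\pi^{\ga\be}$ separates the problem into the $\overline{\Si}$-divergence (already handled by Claim \ref{claim_2}), a ``good'' term producing the Hessian of $F$, and a leftover $\Si$-divergence.

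For the good term, the Leibniz rule applied to $\pi^{\ga\be}\nabla_\al(\pi^\al_{\mss\be}u^\rho\partial_\rho F)$ yields $\pi^{\ga\be}\pi^\al_{\mss\be}u^\rho\nabla_\al\partial_\rho F$ plus factors in which the derivative falls on $\pi$ or on $u$ rather than on $\partial F$; the latter are manifestly of type $B^\ga(\partial g,\partial s,\partial F,\partial u)$. Using the idempotency $\pi^{\ga\be}\pi^\al_{\mss\be}=\pi^{\ga\al}$ (a consequence of $\pi^{\ga\be}u_\be=0$) and the symmetry $\nabla_\al\partial_\rho F=\nabla_\rho\partial_\al F$ of the Hessian of a scalar, the leading term collapses, after relabeling, to $2u^\al\pi^{\ga\rho}\nabla_\al\partial_\rho F$. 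Hence
\begin{gather}
2u^\al\pi^{\ga\rho}\nabla_\al\partial_\rho F = \pi^{\ga\be}\nabla_\al\overline{\Si}^\al_{\mss\be} - \pi^{\ga\be}\nabla_\al\Si^\al_{\mss\be} + B^\ga(\partial g,\partial s,\partial F,\partial u).
\nonumber
\end{gather}

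The main obstacle is the surviving term $\pi^{\ga\be}\nabla_\al\Si^\al_{\mss\be}$, which at first sight carries second derivatives of $u$ and so does not lie in $B^\ga(\partial g,\partial s,\partial F,\partial u)$. The crucial point is that the equations of motion annihilate precisely these second derivatives: contracting the conservation law (\ref{original_conservation}), with $T_{\al\be}$ written in the form (\ref{T_vis_i}), against $\pi^{\ga\be}$ and using $\pi^{\ga\be}u_\be=0$ gives
\begin{gather}
\sv\,\pi^{\ga\be}\nabla_\al\Si^\al_{\mss\be} = \pi^{\ga\be}\partial_\be p - rF u^\si\pi^{\ga\be}\nabla_\si u_\be,
\nonumber
\end{gather}
whose right-hand side contains only first derivatives of $u$ and of the thermodynamic quantities, the pressure gradient being re-expressed through $\partial s$ and $\partial F$. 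Since $\sv\neq0$ by standing assumption, I may divide, concluding that $\pi^{\ga\be}\nabla_\al\Si^\al_{\mss\be}$ is itself of type $B^\ga(\partial g,\partial s,\partial F,\partial u)$. Substituting this together with Claim \ref{claim_2} into the displayed relation yields the statement. The only delicate bookkeeping is to verify that every term absorbed into $B^\ga$ genuinely carries no second derivative of $u$; it is exactly here that the hypothesis $\sv\neq0$ is indispensable.
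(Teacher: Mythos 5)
Your argument is correct and is essentially the paper's own proof: differentiate the first identity of Claim \ref{claim_1}, invoke Claim \ref{claim_2} for the $\overline{\Si}$-divergence, and use $\pi^{\ga\be}\nabla^\al T_{\al\be}=0$ together with $\sv\neq 0$ to reduce $\pi^{\ga\be}\nabla_\al\Si^\al_{\mss\be}$ to a term of type $B^\ga(\partial g,\partial s,\partial F,\partial u)$. The bookkeeping you flag at the end (the Hessian symmetry, idempotency of $\pi$, and the re-expression of $\partial p$ through $\partial s$ and $\partial F$) is exactly what the paper relies on as well.
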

\begin{proof}
Combining the first identity of claim \ref{claim_1} with claim \ref{claim_2},
\begin{gather}
\pi^{\ga\be} \nabla_\al \Si^\al_{\mss \be } =
-2u^\rho \pi^{\ga\al}\nabla_\al \partial_\rho F 
- 2 \pi^{\ga\be} K^\al \Om_{\al\be} + \pi^{\ga\be} \nabla_\al \Theta^\al_{\mss \be}+ B^\gamma(\partial g, \partial F, \partial u). 
\nonumber
\end{gather}
Writing (\ref{T_vis_i}) as $T_{\al\be} = \widehat{T}_{\al\be} + \sv \Si_{\al\be}$,
 noticing that $ \pi^{\ga\be} \nabla_\al  \widehat{T}_{\al\be} = B^\ga(\partial g,
\partial F, \partial u)$, and invoking (\ref{original_conservation}),
we have
\begin{gather}
\pi^{\ga\be} \nabla_\al \Si^\al_{\mss \be } = 
B^\ga(\partial g, \partial s,
\partial F, \partial u),
\label{projection_div_shear_lower_order}
\end{gather}
since $\sv > 0$. The claim follows from these last two equalities. 
\end{proof}

\begin{claim} We have
\begin{gather}
u^\al \nabla_\al u_\be = \pi^\al_\be \frac{\partial_\al F}{F} + \frac{1}{F} u^\al 
\Om_{\al\be}.
\nonumber
\end{gather}
\label{claim_4}
\end{claim}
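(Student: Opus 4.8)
The plan is to prove Claim \ref{claim_4} by a direct computation, starting from the definition of the vorticity together with the relation (\ref{dyn_vel}) between $C$ and $u$. The guiding observation is that contracting $\Om_{\al\be}$ with $u^\al$ produces precisely the material derivative $u^\al \nabla_\al u_\be$, up to lower-order terms involving $\partial F$ that will reassemble into the projected gradient of $F$; the normalization conditions are exactly what make the extraneous terms disappear.

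First I would expand the vorticity using $C_\be = F u_\be$. Equation (\ref{vorticity}) gives
\begin{gather}
\Om_{\al\be} = \nabla_\al(F u_\be) - \nabla_\be(F u_\al)
= (\partial_\al F) u_\be + F \nabla_\al u_\be - (\partial_\be F) u_\al - F \nabla_\be u_\al.
\nonumber
\end{gather}
Contracting with $u^\al$ and invoking the normalization (\ref{original_normalization}), $u^\al u_\al = 1$, together with its consequence (\ref{useful}), $u^\al \nabla_\be u_\al = 0$, the last term drops and the third collapses, yielding
\begin{gather}
u^\al \Om_{\al\be} = (u^\al \partial_\al F) u_\be + F u^\al \nabla_\al u_\be - \partial_\be F.
\nonumber
\end{gather}
Solving for the material derivative and dividing by $F$ (which is legitimate since $F > 0$) gives
\begin{gather}
u^\al \nabla_\al u_\be = \frac{1}{F} u^\al \Om_{\al\be} + \frac{1}{F}\partial_\be F - \frac{1}{F}(u^\al \partial_\al F) u_\be.
\nonumber
\end{gather}

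The remaining step is purely algebraic: I would recognize the two $\partial F$ terms as the projection of the gradient of $F$. Since $\pi^\al_\be = \delta^\al_\be - u^\al u_\be$, one has
\begin{gather}
\pi^\al_\be \frac{\partial_\al F}{F} = \frac{1}{F}\partial_\be F - \frac{1}{F}(u^\al \partial_\al F) u_\be,
\nonumber
\end{gather}
which is exactly the combination appearing above, so the claim follows. I do not expect any genuine obstacle here, as the whole argument is a single contraction; the only point demanding care is the correct bookkeeping of the identities (\ref{original_normalization}) and (\ref{useful}), which are precisely what eliminate the $\nabla_\be u_\al$ term and reduce the $u^\al u_\al$ factor, leaving the clean projected form stated in the claim.
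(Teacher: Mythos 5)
Your proof is correct, but it takes a different route from the paper. You work entirely with the physical connection $\nabla$: you expand $\Om_{\al\be}=\nabla_\al(Fu_\be)-\nabla_\be(Fu_\al)$ by the Leibniz rule, contract with $u^\al$, and let (\ref{original_normalization}) and (\ref{useful}) kill the extraneous terms, after which the two $\partial F$ terms visibly reassemble into $\pi^\al_\be\,\partial_\al F/F$. The paper instead routes the computation through the conformal metric $\overline{g}=F^2g$: it first uses the transformation law (\ref{relation_cov_der}) to compute $\overline{\nabla}_\al C_\be$ in terms of $\nabla_\al u_\be$ and $\partial F$, contracts to get $\overline{C}^\al\overline{\nabla}_\al C_\be=u^\al\nabla_\al u_\be-\pi^\al_\be\,\partial_\al F/F$, and then contracts the barred form of the vorticity (\ref{vorticity_nabla_bar}) with $\overline{C}^\al$ using $\overline{C}^\al C_\al=1$. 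The two arguments are equivalent --- both amount to contracting the vorticity two-form with the velocity direction --- but yours is more elementary and self-contained, needing only the definition (\ref{vorticity}) and the two normalization identities, whereas the paper's version is the natural one within the barred formalism it has already set up for Claims \ref{claim_1}--\ref{claim_3} and reuses the intermediate identity $\overline{C}^\al\overline{\nabla}_\al C_\be=u^\al\nabla_\al u_\be-\pi^\al_\be\,\partial_\al F/F$ elsewhere in the derivation. Your bookkeeping is accurate throughout, including the index placement $\pi^\al_\be=\delta^\al_\be-u^\al u_\be$, so nothing is missing.
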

\begin{proof}
From (\ref{relation_cov_der}) and (\ref{dyn_vel}),
\begin{gather}
\overline{\nabla}_\al C_\be =  F \nabla_\al u_\be  - \partial_\be F u_\alpha+
u^\rho \partial_\rho F g_{\al\be},
\nonumber
\end{gather}
which upon contraction gives
\begin{gather}
\overline{C}^\al \overline{\nabla}_\al C_\be = u^\al \nabla_\al u_\be 
- \pi^\al_\be \frac{\partial_\al F}{F}.
\nonumber
\end{gather}
Contracting (\ref{vorticity_nabla_bar}) with $\overline{C}^\al$, using
(\ref{unit_C_bar}) and the above equality, one obtains the result, after rewriting
$C$ in terms of $F$ and $u$.
\end{proof}

\begin{claim} We have
\begin{align}
\begin{split}
2 \pi^{\ga\be} \nabla^\al \Si_{\al\be} & =
2 F \nabla^\al \nabla_\al u^\ga 
- \pi^{\ga\be}\nabla_\al \Theta^\al_{\mss \be} - 2u^\al u^\rho \nabla_\al \Om_\rho^{\mss \ga}
\\
& + 2 \pi^{\al\mu} \pi^{\ga \nu} \nabla_\al \nabla_\nu C_\mu+ 
B^\ga(\partial^2 g, \partial  s, \partial F, \partial u, \Om, \partial C).
\end{split}
\nonumber
\end{align}
\label{claim_5}
\end{claim}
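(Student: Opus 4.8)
The plan is to establish the claim by a direct (if lengthy) computation that rewrites the projected divergence $2\pi^{\ga\be}\nabla^\al\Si_{\al\be}$ in terms of a wave operator acting on $u$ plus lower-order and already-controlled terms. The starting point is the expression (\ref{shear_u}) for $\Si_{\al\be}$, namely $\Si_{\al\be}=F\pi_\al^\mu\pi_\be^\nu(\nabla_\mu u_\nu+\nabla_\nu u_\mu)$. The key idea is that the dangerous second-order term is $2F\nabla^\al\nabla_\al u^\ga$, the wave operator on $u$, and everything else must be organized either into the vorticity-type terms $\pi^{\ga\be}\nabla_\al\Theta^\al_{\mss\be}$ and $2u^\al u^\rho\nabla_\al\Om_\rho^{\mss\ga}$, into the mixed second-order term $2\pi^{\al\mu}\pi^{\ga\nu}\nabla_\al\nabla_\nu C_\mu$, or else absorbed into the catch-all $B^\ga(\partial^2 g,\partial s,\partial F,\partial u,\Om,\partial C)$.

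First I would raise an index and expand $2\pi^{\ga\be}\nabla^\al\Si_{\al\be}$ using the product rule, noting that any derivative landing on $F$ or on a projector $\pi$ produces at most first derivatives of $u$, $F$, $C$ (together with curvature/metric terms), and hence is of the type recorded in $B^\ga$; this is where the dependence on $\partial^2 g$ enters, via commutators that generate Ricci terms through (\ref{contracted_Ricci_id}). The surviving second-order contributions come from differentiating the $\nabla_\mu u_\nu+\nabla_\nu u_\mu$ inside $\Si$. I would then split the two projectors $\pi_\al^\mu\pi_\be^\nu=(g_\al^\mu-u_\al u^\mu)(g_\be^\nu-u_\be u^\nu)$ and track which combinations survive after contracting with $\pi^{\ga\be}$ and $\nabla^\al$. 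The term where both projectors act trivially yields $2F\nabla^\al\nabla_\al u^\ga$ (the wave term) plus $2F\pi^{\ga\be}\nabla^\al\nabla_\be u_\al$; the latter is converted, via claim \ref{claim_4} and the definition (\ref{vorticity_nabla_bar})--(\ref{vorticity_nabla_bar}) of $\Om$ in terms of $\overline\nabla C$, into vorticity terms and the mixed term $2\pi^{\al\mu}\pi^{\ga\nu}\nabla_\al\nabla_\nu C_\mu$.

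The terms carrying the factors $u_\al u^\mu$ and $u_\be u^\nu$ are handled by pulling the $u$'s outside the derivative (picking up $\partial u$ pieces into $B^\ga$) and then recognizing the resulting second-order expressions $u^\al u^\rho\nabla_\al\Om_\rho^{\mss\ga}$ and the $\Theta$-divergence, using claim \ref{claim_1}'s identity $\overline\Si_{\al\be}=2\overline\nabla_\be C_\al+\Theta_{\al\be}$ and the relation (\ref{relation_cov_der}) to pass between $\nabla$ and $\overline\nabla$ at the cost of $\partial F$ (hence $B^\ga$) terms. Throughout, every commutation of covariant derivatives is paid for by a Ricci-curvature term, which by Einstein's equation (\ref{ens1}) is algebraic in $(T,g)$ and therefore contributes only to $B^\ga(\partial^2 g,\ldots)$.

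\textbf{The main obstacle} I expect is bookkeeping rather than conceptual: correctly matching the coefficients (the $2F$, the single $-1$ on $\pi^{\ga\be}\nabla_\al\Theta^\al_{\mss\be}$, the $-2$ on the $u^\al u^\rho\nabla_\al\Om$ term) after the projector expansion, and ensuring that the numerous first-order remainders genuinely collapse into a single $B^\ga$ with no hidden second derivatives of $u$ or $C$ surviving outside the three displayed second-order structures. The delicate point is that $\pi^{\ga\be}\nabla^\al\nabla_\be u_\al$ is \emph{not} itself lower order, so it must be rewritten carefully—swapping derivatives to expose $\Om$ and the mixed $C$-Hessian—rather than discarded; getting this rewriting right, while keeping the antisymmetric (vorticity) and symmetric (shear/wave) parts cleanly separated, is the crux of the computation.
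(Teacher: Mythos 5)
Your proposal is correct and follows essentially the same route as the paper: the paper simply performs your $u\leftrightarrow C$ conversion at the outset, splitting $\Si_{\al\be}=F\pi_\al^\mu\pi_\be^\nu\nabla_\mu u_\nu+\pi_\al^\mu\pi_\be^\nu\nabla_\nu C_\mu$ before differentiating so that the wave term and the mixed $C$-Hessian each appear exactly once, and then handles the $u^\al u^\mu$-projector piece via claims \ref{claim_3} and \ref{claim_4} exactly as you describe. The only slight imprecision in your sketch is the attribution of terms: the ``div-type'' piece $F\pi^{\ga\be}\nabla^\al\nabla_\be u_\al$ converts (after the projector kills the $F$-Hessian) into the mixed $C$-term alone, while the $\Om$ and $\Theta$ contributions arise entirely from the $u^\al u^\mu\nabla_\al\nabla_\mu u^\ga$ piece through claims \ref{claim_3} and \ref{claim_4}.
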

\begin{proof}
Use (\ref{dyn_vel}) in the first term on the right hand side of (\ref{shear_C}) to 
write it as
\begin{gather}
\Si_{\al\be} = F \pi_\al^\mu \pi_\be^\nu \nabla_\mu u_\nu + 
\pi_\al^\mu \pi_\be^\nu \nabla_\nu C_\mu ,
\nonumber
\end{gather}
where $\pi_\be^\nu u_\nu = 0$ has been employed. Applying 
$\pi^{\ga\be} \nabla^\al$, we get
\begin{gather}
\pi^{\ga\be} \nabla^\al \Si_{\al\be}
= F(\nabla_\al \nabla^\al u^\ga - u^\al u^\mu \nabla_\al \nabla_\mu u^\ga)
+ \pi^{\al\mu} \pi^{\ga \nu} \nabla_\al \nabla_\nu C_\mu
+ B^\ga (\partial g, \partial F, \partial u, \partial C),
\label{claim_5_exp_1}
\end{gather}
where we have used that (\ref{useful}) implies
\begin{gather}
u^\nu \nabla^\al \nabla_\al u_\nu 
=  \nabla^\al(u^\nu \nabla_\al u_\nu) - \nabla^\al u^\nu \nabla_\al u_\nu  
= B^\ga(\partial g, \partial u),
\nonumber
\end{gather}
and
\begin{gather}
u^\nu \nabla_\al \nabla_\mu u_\nu 
=  \nabla_\al(u^\nu \nabla_\mu  u_\nu) - \nabla_\al u^\nu \nabla_\mu  u_\nu  
= B^\ga(\partial g, \partial u).
\nonumber
\end{gather}
Commuting $u^\mu$ and $\nabla_\al$ 
one obtains, in light of claim \ref{claim_4},
\begin{gather}
F u^\al u^\mu \nabla_\al \nabla_\mu u^\ga 
= u^\al\pi^{\ga\rho} \nabla_\al \partial_\rho F + u^\rho u^\al \nabla_\al \Om_\rho^{\mss \ga}
+ B^\ga(\partial g, \partial F, \partial u, \Om),
\nonumber
\end{gather}
so that (\ref{claim_5_exp_1}) becomes
\begin{align}
\begin{split}
\pi^{\ga\be} \nabla^\al \Si_{\al\be}
& = F\nabla_\al \nabla^\al u^\ga - 
u^\al\pi^{\ga\rho} \nabla_\al \partial_\rho F - u^\rho u^\al \nabla_\al \Om_\rho^{\mss \ga}
\\
& + \pi^{\al\mu} \pi^{\ga \nu} \nabla_\al \nabla_\nu C_\mu
+ B^\ga (\partial g, \partial F, \partial u, \partial C, \Om).
\end{split}
\nonumber
\end{align}
The result now follows by using claim \ref{claim_3} to eliminate
$u^\al\pi^{\ga\rho} \nabla_\al \partial_\rho F$ from the above expression, 
after noticing that $K^\al \Om_{\al\be}$ can be absorbed into
$B^\ga$.
\end{proof}
In light of (\ref{projection_div_shear_lower_order}) and using the definition
of $\Theta$,  
claim \ref{claim_5} gives the desired equation for $u$, namely,
\begin{align}
\begin{split}
 g^{\mu \nu } \partial_{\mu \nu}  u_\ga &
 - \frac{1}{2 \sqrt{C^\rho C_\rho} }
 g^{\mu \nu} \partial_\nu \Om_{\mu \ga} 
- \frac{1}{2 \sqrt{C^\rho C_\rho} }
u^\mu u^\nu \partial_\mu\Om_{\nu \ga} 
+
\frac{1}{2 \sqrt{C^\rho C_\rho} }
 u_\ga u^\mu g^{\nu \be} \partial_\be \Om_{ \nu \mu} 
 \\
 &
+  
\frac{1}{ \sqrt{C^\rho C_\rho} } \pi^{\al\mu} \pi^\nu_\ga \partial_{\al\nu} C_\mu + 
B_\ga(\partial^2 g,  \partial  s, \partial u, \Om, \partial C) = 0,
\end{split}
\label{u_equation}
\end{align}
where we used $F>0$, and subsequently (\ref{dyn_vel}) to 
eliminate 
the $F$ dependence.

\subsection{Equations for $\Om$.}
Recalling that 
\begin{gather}
\cL_C \Om_{\al \be}
= C^\mu \nabla_\mu \Om_{\al\be} + \nabla_\al C^\mu \Om_{\mu \be}
+ \nabla_\be C^\mu \Om_{\al \mu},
\nonumber
\end{gather}
we see that (\ref{evolution_vorticity}) has the form
\begin{gather}
C^\mu \partial_\mu \Om_{\al \be} + q u^\mu \partial_{\mu\al} C_\be 
-qu^\mu \partial_{\mu \be} C_\al + B_{\al\be}(\partial^2 g, \partial s, \partial u, \Om, \partial C) = 0.
\label{Omega_equation}
\end{gather}

\subsection{Equations for $C$.}

In order to close the system, we need to specify equations of motion for 
$C$. Since all the equations from (\ref{original}) have already 
been employed above in the derivation of equations for $g$, $s$,  $u$,
and $\Om$,
one suspects that equations for $C$  should be determined by
some extra conditions, not explicitly present in (\ref{original}).
However, in order to do so without changing the content of the original
problem, one must choose equations that are necessarily satisfied
by any solution to (\ref{original}). Thus, convenient identities,
that follow from standard tensor calculus and our basic definitions,
will be employed.

Using the definition of the Hodge-Laplacian gives
\begin{gather}
\Delta C = d \de C + \de d C = \de \Om,
\nonumber
\end{gather}
where $\de C = - \nabla^\mu C_\mu = 0$ (see (\ref{original_incompressible})) and
the definition of $\Om$, (\ref{vorticity}), have been used. On the other hand,
recalling
\begin{gather}
(\Delta C)_\al = - \nabla^\mu \nabla_\mu C_\al + R_{\mu\al}C^\mu,
\nonumber
\end{gather}
we obtain the following equation for $C$:
\begin{gather}
g^{\mu\nu}\partial_{\mu\nu} C_\al - g^{\mu\nu} \partial_\mu \Om_{\nu \al}
+ B_\al(\partial^2 g, \Om, \partial C) = 0.
\label{C_equation}
\end{gather}

\subsection{The full system.}
The sought new system of equations consists of (\ref{ens5}),
(\ref{s1}), (\ref{u_equation}), 
(\ref{Omega_equation}), and (\ref{C_equation}).
These are 25 equations for the 25 unknowns: ten $g_{\al\be}$, one $s$,
four $u_\al$, six $\Om_{\al\be}$, and four $C_\al$.
We shall refer to this system as the modified
incompressible Einstein-Navier-Stokes system.
One important aspect of our proof consists in showing that  the 
modified incompressible Einstein-Navier-Stokes system
 forms a Leray-Ohya system \cite{LerayOhyaNonlinear}, which
  depends, among other things, on a counting  of 
derivatives.
For this purpose, it is convenient to write the system 
symbolically
as

\begin{align}
\left\{
\begin{matrix}
a_{11}(g) \partial^2 g & + & 0 & + & a_{13}(g,u,C)\partial u & + & 0  
\\
0  & + & a_{22}(g,u)\partial^2 s & + & a_{23}(g,s,\partial u, C) \partial^2 u   
&+& 0
 \\
0  & + & 0 & + & a_{33}(g) \partial^2 u  & + & a_{34}(g,u,C)\partial \Om  
\\
0  & + & 0 & + &0  & + & a_{44}(g,C)\partial \Om 
\\ 
0  & + & 0 & + &0  & + & a_{54}(g)\partial \Om  
\end{matrix}
\nonumber
\right.
\end{align}
\begin{subequations}
\begin{alignat}{4}
 & + && \hspace{1.2cm} 0 &&  = && \hspace{0.25cm} B_g( \partial g, s, u, C),  
 \label{eq1} \\
 &+ && \hspace{1.2cm} 0 &&  =  && \hspace{0.25cm} B_s(\partial^2 g, \partial s, \partial u, \partial C), 
  \label{eq2} \\
 & + && \hspace{0.25cm} a_{35}(g,u,C) \partial^2 C && = && 
 \hspace{0.25cm} B_u(\partial^2 g, \partial s, \partial u, \Om, \partial C), 
 \label{eq3} \\
 & + && \hspace{0.25cm} a_{45}(g,u) \partial^2 C && = && 
 \hspace{0.25cm} B_\Om(\partial^2 g, \partial s, \partial u, \Om, \partial C), 
 \label{eq4} \\ 
 & + && \hspace{0.25cm}  a_{55}(g) \partial^2 C && = && 
 \label{eq5} \hspace{0.25cm} B_C (\partial^2 g, \Om, \partial C) .
\end{alignat}
\label{new_system_symbolic}
\end{subequations}
We write this more succinctly as
\begin{gather}
A(V,\partial)V = B(V),
\nonumber
\end{gather}
where $V=(g,s,u,\Om, C)$, $B(V)=(B_g, B_s, B_u, B_\Om, B_C)$,
 and
\begin{align}
A(V,\partial) = \left(
\begin{matrix}
a_{11}(g) \partial^2  &  0 &  a_{13}(g,u,C)\partial  &  0  &  0  
 \\
0  &  a_{22}(g,u)\partial^2  &  a_{23}(g,s,\partial u, C) \partial^2   &  0  &  0 &
 \\
0  &  0 &  a_{33}(g) \partial^2   &  a_{34}(g,u,C)\partial &  a_{35}(g,u,C) \partial^2    \\
0  &  0 & 0  &  a_{44}(g,C)\partial   &  a_{45}(g, u) \partial^2   \\ 
0  &  0 & 0  &  a_{54}(g)\partial   &  a_{55}(g) \partial^2  
\end{matrix}
\right)
\label{matrix}
\end{align}

\section{Proof of Theorem \ref{main_theorem}. \label{main_proof}}
The main tool in the proof of theorem \ref{main_theorem} is 
the theory of weakly hyperbolic equations in Gevrey spaces developed
by Leray and Ohya
\cite{LerayOhyaLinear,LerayOhyaNonlinear, OhyaLinear}
and extended by Choquet-Bruhat \cite{CB_diagonal} to the type of non-diagonal
systems which will be of interest here. We shall not review these
constructions, except for those aspects that will be necessary
to fix our notation and conventions, referrring the reader to the 
above references for the complete discussion. 
Some aspects of the Leray-Ohya theory can
also be found (without proofs) in 
\cite{ChoquetBruhatGRBook, ChoquetetallAnalysisManifolds, DisconziViscousFluidsNonlinearity}.

For the reader's convenience, we start by recalling the definition of 
Gevrey spaces. As our proof is essentially local, with a global (in space)
solution obtained by a gluing argument, it suffices 
to give the definition in the case of $\RR^{n+1}$, whose coordinates
we denote by $(x_0, \dots, x_{n})$.
For a number $|X|  > 0$, let $X$ be the strip 
$0 \leq x^0  \leq |X|. $
The Gevrey space $\ga^{m, (\si)}(X)$ is defined as follows.
Let $S_t = \{ x^0 =t \}$, and 
\begin{gather}
|D^k u |_t = c(n,k) \sup_{|\al| \leq k} \p D^\al u \p_{L^2(S_t)},
\nonumber
\end{gather}
where 
$c(n,k)$ is a normalization constant. Then, for $\si \geq 1$,  and
$m$ a non-negative integer,
$ u \in \ga^{m, (\si)}(X)$  means that $u \in C^\infty(X)$, and
\begin{gather}
\sup_{|\be| \leq m, \, \al,  \, 0 \leq t \leq |X|} \frac{1}{ (1 + |\al| )^\si } 
\left( |D^{\be + \al} u |_t \right)^\frac{1}{ 1 + |\al|  } < \infty,
 \nonumber
 \end{gather}
where the sup over $\al$ is taken over multi-indices such that $\al_0 = 0$.
Analogously one defines such spaces for open sets, product spaces, etc. 
Intuitively, $\ga^{m, (\si)}(X)$ can be thought of as a space between
analytic and smooth functions, in the following sense. An analytic function
$u$ 
on $S_t$ obeys, on each compact set, an inequality of the form
$|D^\al u| \leq C^{|\al|+1} \al!$, for some $C>0$. Gevrey functions
of class $\si \geq 1$ are  smooth functions obeying
 the weaker inequality $|D^\al u| \leq C^{|\al|+1} (\al!)^\si$.
 Then, $\ga^{m, (\si)}(X)$ consists of those functions whose
 derivatives up to order $m$, restricted to each time slice $S_t$,
 belong to the Gevrey space of class $\si$ --- except that it is convenient
 to characterize the Gevrey spaces of $S_t$ with the help of  an integral norm, 
 as done above. Gevrey spaces of functions defined  on a hypersurface 
 $\Si \subset X$ (say, on $\{ x^0 = 0 \}$), are defined in an analogous fashion 
 and denoted by $\ga^{(\si)}(\Si)$. These will be the spaces
 where initial data is prescribed (notice that there is no supremum over 
 $\beta$ in this case).
  Gevrey spaces are important in particular because 
 it is possible to establish in them the well-posedness of certain PDEs that
 are known not to be well-posed in Sobolev or smooth spaces \cite{MizohataCauchyProblem}.
At the same time, Gevrey spaces allow constructions with compactly supported functions, 
an important tool in analysis not possible in the class of analytic 
functions. 
 See \cite{LerayOhyaNonlinear,RodinoGevreyBook} for details on Gevrey spaces and their applications.

Consider a system of $N$ partial differential 
equations and $N$ unknowns in  $X = \RR^n \times [0,T]$, 
and
denote the unknown as 
 $V=(v^I)$, $I=1,\dots, N$. Suppose that the
system has the following quasi-linear structure:  it is possible 
to attach to each unknown $v^I$ a non-negative integer $m_I$,  and to
each equation a non-negative integer $n_J$, such that the system reads
\begin{gather}
h^J_I(\partial^{m_K - n_J -1} v^K, \partial^{m_I - n_J}) v^I
+ b^J(\partial^{m_K - n_J - 1} v^K) = 0, \, J=1, \dots, N.
\label{general_system}
\end{gather}
The notation here is similar to the 
one we used to write system (\ref{new_system_symbolic}), namely,
$h^J_I(\partial^{m_K - n_J -1} v^K, \partial^{m_I - n_J})$
is a homogeneous differential operator of order $m_I - n_J$ (which can
be zero), whose coefficients depend on at most
$m_K - n_J -1$ derivatives of $v^K$, $K=1,\dots N$. The remaining terms, 
$b^J(\partial^{m_K - n_J - 1} v^K)$, also depend on at most 
$m_K - n_J -1$ derivatives of $v^K$,  $K=1,\dots N$.

Recall that the characteristic polynomial of (\ref{general_system}) 
at $x \in X$ and for a given $V$
 is 
the polynomial in the co-tangent space $T^*_x X$,
$p(\xi)$, $\xi \in T_x^* X$,
of degree
\begin{gather}
\ell = \sum_{I=1}^N m_I - \sum_{J=1}^N n_J,
\nonumber
\end{gather}
given by the principal part (of order $\ell$) 
of the characteristic determinant of the system, $
\det(h^J_I(\xi))$. 

Consider the 
Cauchy problem for (\ref{general_system}), with Cauchy data given
on $X_0 = \RR^n  \times \{ t = 0\}$. Assume that for any $x \in X_0$,
and with $V$ taking the values of the Cauchy data on $X_0$,
the characteristic polynomial $p(\xi)$ is a product of $q$
hyperbolic polynomials of orders $\ell_q$,
\begin{gather}
p(\xi) = p_{1}(\xi) \cdots p_q(\xi).
\nonumber
\end{gather}
Suppose finally that 
\begin{gather}
\max_q \ell_q \geq \max_I m_I - \min_J n_J.
\nonumber
\end{gather}
Building on the techniques developed in \cite{LerayOhyaNonlinear}
Choquet-Bruhat proved \cite{CB_diagonal}
under the above conditions, 
the Cauchy problem for (\ref{general_system}) has a unique solution 
$V$ in the Gevrey space $\ga^{\operatorname{m},(\si)}(X^\prime)$, 
where $X^\prime = \RR^n \times [0,T^\prime]$, $T^\prime \leq T$,
for a suitable integer
 $\operatorname{m} \geq 0$, and $1 \leq \si <
  \si_0 = \frac{q}{q-1}$ (the case $q=1$, 
$\si_0 = \infty$, corresponds to solutions in Sobolev spaces). Furthermore,
the solution enjoys the domain of dependence or finite propagation speed
property, with the domain of dependence of a point $x \in X^\prime$
determined by the characteristic cone $\{ p(\xi) = 0 \}$ at $x$. \\

\noindent \emph{Proof of Theorem \ref{main_theorem}.} We shall verify 
that the  modified incompressible Einstein-Navier-Stokes system
 is of the form (\ref{general_system}) and satisfies all the 
conditions given in \cite{CB_diagonal} which we summarized 
 above.

Consider the unknown $V=(g,s,u,\Om, C) = (v^1, v^2, v^3, v^4, v^5)$
for the system (\ref{new_system_symbolic}). Naturally, it is understood
that each $v^I$ and each equation in (\ref{new_system_symbolic}) represent, 
respectively, a set of unknowns and a set of equations, but they can be 
grouped together since they are all of the same form. For instance,
for all the ten unknowns $g$, all the equations take the same form
(\ref{ens5}). We also remark that, as it is standard in the study of the
evolution problem in General Relativity, although $V$ and
(\ref{new_system_symbolic}) are defined in a local coordinate patch, we
rely on the aforementioned results \cite{CB_diagonal, LerayOhyaNonlinear}, 
given for $\RR^n \times [0,T]$ ($n=3$ in our case),
using the finite propagation speed and a standard gluing argument 
to construct global in space solutions (see below).

One verifies that (\ref{new_system_symbolic}) has the form (\ref{general_system})
upon choosing
\begin{align}
\begin{array}{ccccc}
m_1 = 3, & m_2 = 2, & m_3 = 2, & m_4 = 1, & m_5 = 2, \\
n_1 = 1, & n_2 = 0, & n_3 = 0, & n_4 = 0, & n_5 = 0,
\end{array}
\label{indices}
\end{align}
where $m_1 = m(v^1) \equiv m(g)$, $m_2 \equiv m(v^2) = m(s)$,
$m_3 = m(v^3) \equiv m(u)$, $m_4 = m(v^4) \equiv m(\Om)$, 
$m_5 = m(v^5) \equiv m(C)$,  
$n_1 = n(\text{equation } (\ref{eq1}) )
\equiv n(\text{equation } (\ref{ens5}) )$, 
 $n_2 = n(\text{equation } (\ref{eq2}) ) \equiv 
 n(\text{equation } (\ref{s1}) )$,
 $n_3 = n(\text{equation } (\ref{eq3}) ) \equiv 
 n(\text{equation } (\ref{u_equation}) )$, 
  $n_4 = n(\text{equation } (\ref{eq4}) ) \equiv 
 n(\text{equation } (\ref{Omega_equation}) )$,
  $n_5 = n(\text{equation } (\ref{eq5}) ) \equiv 
 n(\text{equation } (\ref{C_equation}) )$,
  and letting $h^J_I$ be the differential operator whose matrix $(h^J_J)$
is given by (\ref{matrix}). Indeed, we list below
for each equation $J$ in (\ref{new_system_symbolic}), the value of $n_J$,
the highest
  derivatives 
of the each unknown entering in the coefficients and the right-hand side of the equation, and the  difference 
$m_I - n_J$:
\begin{gather}
\text{eq. } (\ref{eq1}): 
n_1 = 1;
\partial g, s, u, C;
\begin{cases}
m(g) - n_1 \equiv m_1 - n_1 = 2,  \\
m(s) - n_1 \equiv m_2 - n_1 = 1,  \\
m(u) - n_1 \equiv m_3 - n_1 = 1,  \\
m(\Om) - n_1 \equiv m_4 - n_1 = 0,  \\
m(C) - n_1 \equiv m_5 - n_1 = 1,  
\end{cases}
\nonumber
\end{gather}

\begin{gather}
\text{eq. } (\ref{eq2}): 
n_2 = 0;
\partial^2 g, \partial s, \partial u, \partial C;
\begin{cases}
m(g) - n_2 \equiv m_1 - n_2 = 3,  \\
m(s) - n_2 \equiv m_2 - n_2 = 2,  \\
m(u) - n_2 \equiv m_3 - n_2 = 2,  \\
m(\Om) - n_2 \equiv m_4 - n_2 = 1,  \\
m(C) - n_2 \equiv m_5 - n_2 = 2,  
\end{cases}
\nonumber
\end{gather}

\begin{gather}
\text{eq. } (\ref{eq3}): 
n_3 = 0;
\partial^2 g, \partial s, \partial u, \Om, \partial C;
\begin{cases}
m(g) - n_3 \equiv m_1 - n_3 = 3,  \\
m(s) - n_3 \equiv m_2 - n_3 = 2,  \\
m(u) - n_3 \equiv m_3 - n_3 = 2,  \\
m(\Om) - n_3 \equiv m_4 - n_3 = 1,  \\
m(C) - n_3 \equiv m_5 - n_3 = 2,  
\end{cases}
\nonumber
\end{gather}

\begin{gather}
\text{eq. } (\ref{eq4}): 
n_4 = 0;
\partial^2 g, \partial s, \partial u, \Om, \partial C;
\begin{cases}
m(g) - n_4 \equiv m_1 - n_4 = 3,  \\
m(s) - n_4 \equiv m_2 - n_4 = 2,  \\
m(u) - n_4 \equiv m_3 - n_4 = 2,  \\
m(\Om) - n_4 \equiv m_4 - n_4 = 1,  \\
m(C) - n_4 \equiv m_5 - n_4 = 2,  
\end{cases}
\nonumber
\end{gather}

\begin{gather}
\text{eq. } (\ref{eq5}): 
n_5 = 0;
\partial^2 g, \Om, \partial C;
\begin{cases}
m(g) - n_5 \equiv m_1 - n_5 = 3,  \\
m(s) - n_5 \equiv m_2 - n_5 = 2,  \\
m(u) - n_5 \equiv m_3 - n_5 = 2,  \\
m(\Om) - n_5 \equiv m_4 - n_5 = 1,  \\
m(C) - n_5 \equiv m_5 - n_5 = 2.  
\end{cases}
\nonumber
\end{gather}

As described in  \cite{CB_diagonal, LerayOhyaNonlinear}, 
the Cauchy data for a system of the form (\ref{general_system}) consists
of the functions $v^I$, along with their derivatives up to order $m_J - 1$,
on the initial surface. The initial data is also required to satisfy
some compatibility conditions, which essentially come from requiring
that the equations are satisfied on the initial time slice when
they take the values of the initial data. In our case, we have
to further ensure that the initial data for the system (\ref{new_system_symbolic})
is compatible with solutions of the original set of equations, 
i.e., (\ref{original}) (written in harmonic coordinates), and with
 (\ref{dyn_vel}) and (\ref{vorticity}).
 
The derivation, out of the original 
initial data, of Cauchy data for (\ref{new_system_symbolic}),
such that the conditions of the last paragraph are satisfied,
is done in similar fashion as in \cite{DisconziViscousFluidsNonlinearity}, and
therefore we shall skip the details. In a nutshell, one uses the 
equations of motion to derive what the new initial data ought to be.
In fact, such a procedure is commonly used in General Relativity 
when solutions to Einstein equations are found via a different
set of equations 
\cite{ChoquetBruhatGRBook,
CB_York,
DisconziRemarksEinsteinEuler,
DisconziVamsiEinsteinEuler,
Fischer_Marsden_Einstein_FOSH_1972, Fri1, Fri2, Fri3, Fri4, Fri5, FriRenCauchy, Lichnerowicz_MHD_book}. We remark, for future reference,
that although we are treating $\Om$, $C$, and $u$ as independent variables,
and hence we do not know yet that (\ref{dyn_vel}) and (\ref{vorticity})
hold true, these relation are satisfied by the initial data by the way
they are derived; see \cite{DisconziViscousFluidsNonlinearity}. 

Next, we need to compute the characteristic determinant of the system, 
$\det A(V,\xi)$, where $\xi$ is a co-vector and  $A(V,\xi)$ the principal symbol in 
the
direction of $\xi$. From (\ref{matrix}),
\begin{gather}
\det A(V,\xi) = 
\det a_{11}(g, \xi) \det a_{22}(g,u,\xi) \det a_{33}(g,\xi)
\det \widetilde{A}(g,u,C,\xi),
\label{det_product}
\end{gather}
where $a_{ij}(\cdot,\xi)$ and $\widetilde{A}(g,u,C,\xi)$ are, 
respectively, the principal symbols of the differential 
operators $a_{ij}$ and
\begin{align}
\widetilde{A}(g,u,C,\partial) = \left(
\begin{matrix}
 a_{44}(g,C)\partial   &  a_{45}(g,u) \partial^2   \\ 
 a_{54}(g)\partial   &  a_{55}(g) \partial^2  
\end{matrix}
\right).
\nonumber
\end{align}
From  (\ref{ens5}),
(\ref{s1}), (\ref{u_equation}), we find,
\begin{gather}
\det a_{11}(g, \xi) = (\xi^\mu \xi_\mu)^{10},
\label{det_a_11}
\end{gather}
\begin{gather}
\det a_{22}(g, u, \xi) = (u^\mu \xi_\mu)^2,
\label{det_a_22}
\end{gather}
and
\begin{gather}
\det a_{33}(g, \xi) = (\xi^\nu \xi_\nu)^4,
\label{det_a_33}
\end{gather}
where as usual the indices are raised with $g$, i.e.,  
$\xi^\mu \xi_\mu = g^{\mu \nu} \xi_\nu \xi_\mu$, and
$u^\mu \xi_\mu = g^{\mu\nu} u_\nu \xi_\mu$. The powers
$10$ and $4$  in (\ref{det_a_11}) and
(\ref{det_a_33}) come, respectively, from the fact that 
(\ref{ens5}) corresponds to 
  ten equations  and (\ref{u_equation}) to
four equations, whereas the power $2$ 
in (\ref{det_a_22}) comes from the double characteristic
$u^\al u^\be \xi_\al \xi_\be$ of $u^\al u^\be \partial_{\al\be} s$
in (\ref{s1}). 

The operator $\widetilde{A}$ has a more complicated structure,
which requires us to be more explicit. Recalling that $\Om$
has six independent components, the components $(\Om,C)$ in 
$V=(g,s,u,\Om, C)$ are
\begin{gather}
(\Om_{01}, \Om_{02}, \Om_{03}, \Om_{12}, \Om_{13}, \Om_{23},
C_0, C_1, C_2, C_3),
\nonumber
\end{gather}
From (\ref{Omega_equation}) and (\ref{C_equation}), we then see 
that $\widetilde{A}$ has the following form
\begin{gather}
\begin{bmatrix}
C^\mu \partial_\mu & 0 & 0 & 0 & 0 & 0 & - q u^\mu \partial_{\mu 1} &
   q u^\mu \partial_{\mu 0} & 0 & 0 \\
0 & C^\mu \partial_\mu & 0 & 0 & 0 & 0 & -q u^\mu \partial_{\mu 2} &
   0 &  q u^\mu \partial_{\mu 0} &  0\\
0 & 0 & C^\mu \partial_\mu & 0 & 0 & 0 & -q u^\mu \partial_{\mu3} & 
   0 & 0 & q u^\mu \partial_{\mu 0 } \\
0 & 0 & 0 & C^\mu \partial_\mu & 0 & 0 & 0 & -q u^\mu \partial_{\mu 2} &
    q u^\mu \partial_{\mu 1}  & 0 \\
0 & 0 & 0 & 0 & C^\mu \partial_\mu & 0 & 0 & -q u^\mu \partial_{\mu 3} & 
    0 & q u^\mu \partial_{\mu1}  \\
0 & 0 & 0 & 0 & 0 & C^\mu \partial_\mu & 0 & 0 & -q u^\mu \partial_{\mu 3} & 
   qu^\mu \partial_{\mu 2}    \\
g^{\mu 1} \partial_\mu &  g^{\mu 2} \partial_\mu  & g^{\mu 3} \partial_\mu &
    0 & 0 & 0 & g^{\mu \nu } \partial_{\mu \nu} & 0 & 0 & 0 \\
- g^{\mu 0} \partial_\mu & 0 & 0 & g^{\mu 2} \partial_\mu & g^{\mu 3} \partial_\mu &
   0 & 0 & g^{\mu\nu} \partial_{\mu\nu} & 0 & 0 \\
0 & - g^{\mu 0} \partial_\mu & 0 & - g^{\mu 1} \partial_\mu & 0 &
      g^{\mu 3} \partial_\mu & 0 & 0 & g^{\mu\nu}\partial_{\mu\nu} & 0 \\
0 & 0 & - g^{\mu 0} \partial_\mu & 0 & - g^{\mu 1} \partial_\mu & 
     -g^{\mu 2} \partial_\mu & 0 & 0 & 0 & g^{\mu\nu} \partial_{\mu\nu}
\end{bmatrix}
\nonumber
\end{gather}
$\widetilde{A}(g,u,C,\xi)$ is given by
\begin{gather}
\begin{bmatrix}
C^\mu \xi_\mu & 0 & 0 & 0 & 0 & 0 & - q u^\mu \xi_{\mu} \xi_{ 1} &
   q u^\mu \xi_{\mu} \xi_{ 0} & 0 & 0 \\
0 & C^\mu \xi_\mu & 0 & 0 & 0 & 0 & - q u^\mu \xi_{\mu} \xi_{ 2} &
   0 &  q u^\mu \xi_{\mu} \xi_{ 0} &  0 \\
0 & 0 & C^\mu \xi_\mu & 0 & 0 & 0 & - q u^\mu \xi_{\mu} \xi_{ 3} & 
   0 & 0 & q u^\mu \xi_{\mu } \xi_{ 0 } \\
0 & 0 & 0 & C^\mu \xi_\mu & 0 & 0 & 0 & - q u^\mu \xi_{\mu} \xi_{ 2} &
    q u^\mu \xi_{\mu } \xi_{ 1}  & 0 \\
0 & 0 & 0 & 0 & C^\mu \xi_\mu & 0 & 0 & - q u^\mu \xi_{\mu} \xi_{ 3} & 
    0 & q u^\mu \xi_{\mu } \xi_{ 1}  \\
0 & 0 & 0 & 0 & 0 & C^\mu \xi_\mu & 0 & 0 & - q u^\mu \xi_{\mu} \xi_{ 3} & 
     q u^\mu \xi_{\mu} \xi_{ 2}   \\
\xi^1 &  \xi^2  & \xi^3 &
    0 & 0 & 0 & \xi^\mu \xi_\mu  & 0 & 0 & 0 \\
- \xi^0 & 0 & 0 & \xi^2 & \xi^3 &
   0 & 0 & \xi^\mu \xi_\mu  & 0 & 0 \\
0 & - \xi^0 & 0 & - \xi^1 & 0 &
      \xi^3 & 0 & 0 & \xi^\mu \xi_\mu & 0 \\
0 & 0 & - \xi^0 & 0 & - \xi^1 & 
     -\xi^2 & 0 & 0 & 0 & \xi^\mu \xi_\mu 
\end{bmatrix}
\nonumber
\end{gather}
The determinant of the above matrix can be computed, yielding, after much algebra,
\begin{gather}
F^3(F+q)^2 (u^\mu \xi_\mu)^6 (\xi^\la \xi_\la)^2 P(\xi),
\label{det_complicated}
\end{gather}
where 
\begin{gather}
P(\xi) = A \xi_0^4 + B \xi_0^2 + C
\nonumber
\end{gather}
and we have used that $g_{00} = 1$ and $g_{0i} = 0$. 
The coefficients $A$, $B$, and $C$ are given by
\begin{gather}
A = F+q,
\nonumber
\end{gather}
\begin{gather}
B = 2 F \xi_1 \xi^1 + 2q \xi_1 \xi^1 + 2F \xi_2 \xi^2 + 2q \xi_2 \xi^2 +  q \xi_3 \xi^2 + 2F \xi_3 \xi^3
+ q \xi_3 \xi^3,
\nonumber
\end{gather}
and
\begin{gather}
C = F (\xi_1 \xi^1)^2 + q (\xi_1 \xi^1)^2 + 2F \xi_1 \xi^2 \xi_2 \xi^2 + 2q \xi_1 \xi^2 \xi_2 \xi^2
+ q \xi_1 \xi_3 \xi^1 \xi^2 + F (\xi_2 \xi^2)^2 + q (\xi_2 \xi^2)^2
\nonumber \\
+ q \xi_2 \xi_3 (\xi_2)^2 + 2 F \xi_1 \xi_3 \xi^1 \xi^3 + q \xi_1 \xi_3 \xi^1 \xi^3
+ 2F \xi_2 \xi_3 \xi^2 \xi^3 + q \xi_2 \xi_3 \xi^2 \xi^3 + q (\xi_3)^2 \xi^2 \xi^3 + F (\xi_3 \xi^3)^2.
\nonumber
\end{gather}
We investigate the roots of $P(\xi)$. We have
\begin{gather}
(\xi_0)^2 = \frac{-B \pm \sqrt{ B^2 - 4AC}}{2A}.
\nonumber
\end{gather}
We need to verify that the right-hand side is real and non-negative. A long but not difficult 
computation reveals that
\begin{gather}
B^2 - 4AC = q^2 \xi_3^2 (\xi^2 - \xi^3)^2,
\nonumber
\end{gather}
assuring reality. For non-negativity, it suffices to show that $-B - \sqrt{ B^2 - 4AC} \geq 0$.
For this, assume first that we are working at a point where $g$ equals the Minkowski metric, so that,
after some more algebra and recalling our sign convention,
\begin{gather}
-B - \sqrt{ B^2 - 4AC} =
2(F+q)( (\xi_1)^2 + (\xi_2)^2 + \frac{(\xi_3)^2}{2} ) + F (\xi_3)^2  + q \xi_3 \xi_2
- \sqrt{ q^2 \xi_3^2 (\xi_2 - \xi_3)^2 }.
\nonumber
\end{gather}
It suffices to analyze the case where the term $q \xi_3 \xi_2$ gives a non-positive contribution. 
Thus we can replace $q \xi_3 \xi_2$ by $- q \xi_3 \xi_2$ and assume that $\xi_2 \geq 0$ and
$\xi_3 \geq 0$, in which case the above becomes
\begin{gather}
-B - \sqrt{ B^2 - 4AC} =
2(F+q)( (\xi_1)^2 + (\xi_2)^2 + \frac{(\xi_3)^2}{2} ) + F (\xi_3)^2  - q \xi_3 \xi_2
- q \xi_3 |\xi_2 - \xi_3 |.
\nonumber
\end{gather}
If $\xi_2 \geq \xi_3$, we find that
\begin{align}
\begin{split}
-B - \sqrt{ B^2 - 4AC} & =
2(F+q)( (\xi_1)^2 + (\xi_2)^2 + \frac{(\xi_3)^2}{2} ) + F (\xi_3)^2  - q \xi_3 \xi_2
- q \xi_3 (\xi_2 - \xi_3 ) \\
& = 2(F+q)( (\xi_1)^2 + (\xi_2)^2 + \frac{(\xi_3)^2}{2} ) + (F+q) (\xi_3)^2 - 2 q \xi_2 \xi_3
\\
& \geq 2(F+q)( (\xi_1)^2 + (\xi_2)^2 + \frac{(\xi_3)^2}{2} ) + (F+q) (\xi_3)^2 - 2 q (\xi_2)^2 \\
& = 2 (F+q) (\xi_1)^2 + 2F (\xi_2)^2 + 2 (F+q) (\xi_3)^2 \geq 0.
\end{split}
\label{inequality_Min_1}
\end{align}
If $\xi_2 \leq \xi_3$:
\begin{align}
\begin{split}
-B - \sqrt{ B^2 - 4AC} & =
2(F+q)( (\xi_1)^2 + (\xi_2)^2 + \frac{(\xi_3)^2}{2} ) + F (\xi_3)^2  - q \xi_3 \xi_2
- q \xi_3 (\xi_3 - \xi_2 ) \\
& = 2(F+q)( (\xi_1)^2 + (\xi_2)^2 )  + (F+q) (\xi_3)^2 + F (\xi_3)^2 - q (\xi_3)^2 \\
& = 2 (F+q) ( (\xi_1)^2 + (\xi_2)^2 ) + 2 F (\xi_3)^2 \geq 0.
\end{split}
\label{inequality_Min_2}
\end{align}
Therefore, we conclude that $P(\xi)$ factors as the product of two hyperbolic polynomials of degree two, 
$P(\xi) = P_1(\xi) P_2(\xi)$,
at least at a point where the metric equals the Minkowski metric.

Now we consider the general case, i.e., when $g$ does not necessarily equal the Minkowski metric.
Consider the initial hypersurface $\Si$ where the Cauchy data is given. We can assume that the coordinate
chart on $\Si$ is the neighborhood of a point $p$ such that $g_{\al\be}(p) = \eta_{\al\be}$, where
$\eta_{\al\be}$ is the Minkowski metric. Notice that we can still assure that the same coordinates
are harmonic coordinates, since the latter are determined by prescribing the first derivatives of $g$.
Since (\ref{inequality_Min_1}) and (\ref{inequality_Min_2}) are strict inequalities when $\xi \neq 0$,
we see that $-B - \sqrt{B^2 - 4AC} \geq 0$ for points sufficiently near $p$. Therefore, $P(\xi)$
is the product of two hyperbolic polynomial, as desired (notice that the reality condition previously
verified did not use $g_{\al\be}(p) = \eta_{\al\be}$).

Combining (\ref{det_product}), (\ref{det_a_11}), (\ref{det_a_22}),
(\ref{det_a_33}),  (\ref{det_complicated}), and the above,
we conclude that
\begin{gather}
\det A(V,\xi) = F^3 (F+ q)^3  (\xi^\mu \xi_\mu)^{14} 
(u^\nu \xi_\nu)^6  (u^\tau \xi_\tau \xi^\rho \xi_\rho )^2 P_1(\xi) P_2(\xi),
\label{product_hyp_pol}
\end{gather}
where it is understood that the above expression is evaluated
at the initial data since,
as mentioned in the beginning of this section, we need to verify
the hyperbolicity conditions of  \cite{CB_diagonal} when 
the unknown $V$ takes the values of the Cauchy data\footnote{More precisely,
we need to verify the hyperbolicity conditions when $V$ and its derivatives up to an order determined by the compatibility 
conditions take the values of the Cauchy data 
\cite{CB_diagonal, DisconziViscousFluidsNonlinearity}. Here, however, the 
coefficients appearing in the determinant (\ref{product_hyp_pol}) do not
involve derivatives of $V$.}. 
In particular, also as already mentioned,
even though we
are treating $C$ and $u$ as independent variables, for the initial data
it holds that $C = F u$, and we used this fact to eliminate $C$ from (\ref{product_hyp_pol}).

It is well-known (see e.g. \cite{Lichnerowicz_MHD_book}) that
the first, second, and third degree polynomials
$u^\tau \xi_\tau$, $\xi^\mu \xi_\mu$, and $u^\nu \xi_\nu \xi^\rho \xi_\rho$,
are hyperbolic as long as $g$ is a Lorentzian metric and $u$ is  time-like, conditions that are to be fulfilled when $V$ takes 
 our initial conditions. Also, $F+q > 0$ because $q > 0$ by hypothesis and 
 $F \geq 1$ by (\ref{indexF}), which holds for the initial data, 
 as well as the fact that $\ep \geq 0$, $p \geq 0$,
 and $r >0$.
$\det A(V,\xi)$ is, therefore, a product of 24 hyperbolic polynomials, 
with the highest degree of such polynomials being equal to three.
Thus, in the notation employed at the beginning of this section and
with indices $m_I$ and $n_J$ given by (\ref{indices}), we verify that
\begin{gather}
3 = \max_q \ell_q \geq \max_I m_I - \min_J n_J = 3 - 0 = 3.
\nonumber
\end{gather}
We also have $\si_0 = \frac{24}{24-1} = \frac{24}{23}$.

The coefficients of the the differential operator $A(V,\partial)$
depend polynomially on $V$, whereas  $B(V)$ is a rational function
of the functions $v^I$. The denominator of the rational expressions
appearing in $B(V)$ are products of $\sqrt{C^\rho C_\rho}$, 
$r = r(F,s) \equiv r(\sqrt{C^\rho C_\rho},s)$, and 
$\theta = \theta(F,s) \equiv \theta(\sqrt{C^\rho C_\rho},s)$.
Hence, recalling (\ref{temperature}) and that $F>0$,
the denominators in such rational expressions are, as functions of $V$, 
uniformly bounded away from zero (recall that $\Si$ is compact) 
when $V$ takes  the Cauchy data.

We have, therefore, verified all the conditions necessary to apply 
Choquet-Bruhat's theorem \cite{CB_diagonal}
combined with Leray and Ohya's results \cite{LerayOhyaNonlinear},
obtaining a short-in-time solution $V$ 
to (\ref{new_system_symbolic})
with $v^I \in \ga^{m_I,(\si)}(\Si \times [0,T])$, for $1 \leq \si < \si_0$
and some $T>0$.

It has to be shown that the solution $V$ to (\ref{new_system_symbolic})
yields a solution to the original set of equations (\ref{original}).
The argument to show this is very similar to the one employed in 
\cite{DisconziViscousFluidsNonlinearity, Lich_MHD_paper} (see also \cite{Lichnerowicz_MHD_book}), 
thus we just
mention the general idea. Consider the incompressible Einstein-Navier-Stokes
system written in harmonic coordinates. Pichon \cite{PichonViscous} has shown that this
system can be solved for analytic data (his work treated only the case of 
an equation of state that does not include entropy, but it is not
difficult to see that his procedure generalizes to the case of interest here).
By the way the Cauchy data for (\ref{new_system_symbolic}) is derived
out of the initial data for (\ref{original}), the analytic
solution to (\ref{original}) will satisfy the system (\ref{new_system_symbolic})
with $C_\al = F u_\al$ and $\Om_{\al\be} = \nabla_\al C_\be - \nabla_\be C_\al$.
For the case of initial data in Gevrey spaces, as in Theorem
\ref{main_theorem}, we approximate the initial data
by analytic Cauchy data, obtaining a sequence 
$\{ (g_j, u_j, r_j, s_j )\}$
of analytic solutions to (\ref{original}), and a corresponding 
sequence  $\{V_j\}$
of analytic solutions 
to (\ref{new_system_symbolic})
that converges to the solution $V$ obtained above. The estimates
on solutions derived by Leray and Ohya \cite{LerayOhyaNonlinear} assure that 
$\{ ( g_j, u_j, s_j, r_j )\}$ also converges to a limit
$\{ ( g, u,  s, r)\}$ that satisfies the incompressible Einstein-Navier-Stokes
system and belongs to the desired Gevrey class. It is well-known
that a solution to Einstein's equations in harmonic coordinates yields
a solution to the full system if and only if the constraint equations
are satisfied, which is the case by hypothesis\footnote{Here it is important to remind 
the reader of the discussion of section \ref{restriction_vorticity}. In particular, 
the solvability of the constraints under the further requirements imposed by (\ref{evolution_vorticity})
is at this point unknown.}. Finally, we
notice that $\pi^{\ga\be} \nabla^\al T_{\al\be} = 0$ implies
\begin{gather}
0 = u^\al u^\ga \nabla_\al u_\ga = \frac{1}{2} u^\al \partial_\al (u^\ga u_\ga),
\nonumber
\end{gather}
and therefore $u$, being unitary at time zero, remains unitary.

The existence of a domain of dependence also follows from the results
of \cite{LerayOhyaNonlinear}. The domain of dependence of the solution
is given by the intersection of the interior of the cones determined
by the hyperbolic polynomials appearing in the product
(\ref{product_hyp_pol}). All these cones have a common interior,
namely, the interior of the light-cone
$\xi^\mu \xi_\mu = g^{\mu\nu}\xi_\mu \xi_\nu \geq 0$. 
With the domain of dependence at hand, a standard gluing argument produces a
solution that is global in space and geometrically unique. This finishes
the proof of Theorem \ref{main_theorem}.

\vskip 0.2cm
\noindent \textbf{Acknowledgments.} We are grateful to the referees for reading the manuscript 
in much detail and making several important comments.

\bibliographystyle{plain}                                             
\bibliography{References.bib}

\end{document}